\theoremstyle{plain}
\numberwithin{equation}{section}
\newtheorem{thm}{Theorem}[section]
\newtheorem{lem}[thm]{Lemma}
\newtheorem{cor}[thm]{Corollary}
\newenvironment{exam}[1]{\medskip%   
  \setlength{\rightmargin}{\leftmargin}%
  {\noindent\textbf{Example #1}.\enspace}%
  }%
\newcounter{cond}
\newcommand{\complex}{{\mathbb C}}
\newcommand{\real}{{\mathbb R}}
\newcommand{\integers}{{\mathbb Z}}
\newcommand{\ascript}{{\mathcal A}}
\newcommand{\bscript}{{\mathcal B}}
\newcommand{\cscript}{{\mathcal C}}
\newcommand{\mscript}{{\mathcal M}}
\newcommand{\pscript}{{\mathcal P}}
\newcommand{\sscript}{{\mathcal S}}
\newcommand{\deltahat}{\widehat{\delta}}
\newcommand{\deltahatij}{\deltahat _{ij}}
\newcommand{\muhat}{\widehat{\mu}}
\newcommand{\rmre}{\mathrm{Re\,}}
\newcommand{\rmext}{\mathrm{Ext\,}}
\newcommand{\cupdot}{\mathbin{\cup{\hskip-5.4pt}^\centerdot}\,}
\newcommand{\bigcupdotionen}{{\bigcup _{i=1}^n}{\hskip-8pt}^\centerdot{\hskip 8pt}}
\newcommand{\bigcupdotitwon}{{\bigcup _{i=2}^n}{\hskip-8pt}^\centerdot{\hskip 8pt}}
\newcommand{\ab}[1]{\left|#1\right|}
\newcommand{\brac}[1]{\left\{#1\right\}}
\newcommand{\paren}[1]{\left(#1\right)}
\newcommand{\sqbrac}[1]{\left[#1\right]}
\begin{document}

\title{QUANTUM MEASURES AND\\THE COEVENT INTERPRETATION}
\author{Stan Gudder\\ Department of Mathematics\\
University of Denver\\ Denver, Colorado 80208\\
sgudder@math.du.edu}
\date{}
\maketitle

\begin{abstract}
This paper first reviews quantum measure and integration theory. A new representation of the quantum integral is presented. This representation is illustrated by computing some quantum (Lebesgue)${}^2$ integrals. The rest of the paper only considers finite spaces. Anhomomorphic logics are discussed and the classical domain of a coevent is studied. Pure quantum measures and coevents are considered and it is shown that pure quantum measures are precisely the extremal elements for the set of quantum measures bounded above by one. This result is then used to prove a connection between quantum measures on a (finite) event space $\ascript$ and an anhomomorphic logic $\ascript ^*$. To be precise, it is shown that any quantum measure on
$\ascript$ can be transferred to an ordinary measure on $\ascript ^*$. In this way, the quantum dynamics on $\ascript$ can be described by a classical dynamics on the larger space $\ascript ^*$. 
\end{abstract}

\section{Introduction}  % Section 1
Quantum measures and the coevent interpretation were introduced by R. Sorkin in his studies of the histories approach to quantum mechanics and quantum gravity
\cite{hal09, mocs05, sor941, sor942, sor07}. Since then a considerable amount of literature has appeared on these subjects \cite{dgt08, gt09, gtw091, gtw092, gudamm, gudms, gud091, gud092, gud093, gud10, sal02, sor09, sw08}. A quantum measure $\mu$ describes the dyamics of a quantum system in the sense that $\mu (A)$ gives the propensity that the event $A$ occurs. Denoting the set of events by $\ascript$, a coevent is a potential reality for the system given by a truth function $\phi\colon\ascript\to\integers _2$ where $\integers _2$ is the two element Boolean algebra $\brac{0,1}$. Since coevents need not be Boolean homomorphisms, the set $\ascript ^*$ of coevents is called an \textit{anhomomorphic logic}
\cite{gt09, gtw091, gtw092, gud092, gud093, gud10, sor941, sor09}. We refer to the study of anhomomorphic logics as the coevent interpretation of quantum mechanics. One of the goals of this field is to find the ``actual reality'' in $\ascript ^*$ \cite{gt09, gud092, gud10, sor941, sor942, sor09}.

In Section~2, we first review quantum measure and integration theory. A new representation of the quantum integral is presented. As an example, this representation is employed to compute quantum (Lebesgue)${}^2$ integrals. Although general event spaces are treated in Section~2, only finite event spaces are considered in the rest of the paper. Section~3 discusses anhomomorphic logics. The center and classical domain of a coevent are studied. Section~4 first considers pure quantum measures and coevents. We show that the pure quantum measures are precisely the extremal elements for the convex set of quantum measures bounded above by one. This result is then employed to prove an important connection between quantum measures and anhomomorphic logics. More precisely, we show that any quantum measure on a (finite) event space $\ascript$ can be transferred to an ordinary measure on $\ascript ^*$. In this way the quantum dynamics on $\ascript$ can be described by a classical dynamics on the larger space
$\ascript ^*$. Variations of this transference result show that pure and quadratic anhomomorphic logics have better mathematical properties than additive or multiplicative ones.

\section{Quantum Measures and Integrals} % Section 2
Let $(\Omega ,\ascript )$ be a measurable space, where $\Omega$ is a set of \textit{outcomes} and $\ascript$ is a $\sigma$-algebra of subsets of $\Omega$ called \textit{events} for a physical system. If $A,B\in\ascript$ are disjoint, we denote their union by $A\cupdot B$. A nonnegative set function $\mu\colon\ascript\to\real ^+$ is \textit{grade}-2 \textit{additive} if
\begin{equation}         % equation (2.1)
\label{eq21}
\mu\paren{A\cupdot B\cupdot C}=\mu\paren{A\cupdot B}+\mu\paren{A\cupdot C}
+\mu\paren{B\cupdot C}-\mu (A)-\mu (B)-\mu (C)
\end{equation}
for all mutually disjoint $A,B,C\in\ascript$. It follows by induction that if $\mu$ satisfies \eqref{eq21} then
\begin{equation}         % equation (2.2)
\label{eq22}
\mu\paren{\bigcupdotionen A_i}=\sum _{i<j=1}^n\mu\paren{A_i\cupdot A_j}
-(n-2)\sum _{i=1}^n\mu (A_i)
\end{equation}
A $q$-\textit{measure} is a grade-2 additive set function $\mu\colon\ascript\to\real ^+$ that satisfies the following two continuity conditions.
\begin{list} {(C\arabic{cond})}{\usecounter{cond}
\setlength{\rightmargin}{\leftmargin}}
%(C1)
\item If $A_1\subseteq A_2\subseteq\cdots$ is an increasing sequence in $\ascript$, then
\newline
$\displaystyle{\lim _{n\to\infty}\mu (A_n)=\mu\paren{\bigcup _{i=1}^\infty A_i}}$
%(C2)
\item If $A_1\supseteq A_2\supseteq\cdots$ is a decreasing sequence in $\ascript$, then
\newline
$\displaystyle{\lim _{n\to\infty}\mu (A_n)=\mu\paren{\bigcap _{i=1}^\infty A_i}}$
\end{list}

Due to quantum interference, a $q$-measure need not satisfy the usual additivity condition of an ordinary measure but satisfies the more general grade-2 additivity condition \eqref{eq21} instead
\cite{gt09, sor941, sor942}. For example, if $\nu\colon\ascript\to\complex$ is a complex measure corresponding to a quantum amplitude, then $\mu (A)=\ab{\nu (A)}^2$ becomes a $q$-measure. For another example, let $D\colon\ascript\times\ascript\to\complex$ be a decoherence functional as studied in the histories approach to quantum mechanics \cite{hal09, mocs05, sor941, sor942}. Then $\mu (A)=D(A,A)$ becomes a $q$-measure. If $\mu$ is a $q$-measure on $\ascript$, we call $(\Omega ,\ascript ,\mu )$ a $q$-\textit{measure space}.

A signed measure $\lambda\colon\ascript\times\ascript\to\real$, where $\ascript\times\ascript$ is the product $\sigma$-algebra on $\Omega\times\Omega$, is \textit{symmetric} if
$\lambda (A\times B)=\lambda (B\times A)$ for all $A,B\in\ascript$ and is \textit{diagonally positive} if $\lambda (\ascript\times\ascript )\ge 0$ for all $A\in\ascript$. It can be shown that if
$\lambda\colon\ascript\times\ascript\to\real$ is a symmetric diagonally positive signed measure, then $\mu (A)=\lambda (A\times A)$ is a $q$-measure on $\ascript$. Conversely, if $\mu$ is a
$q$-measure on $\ascript$, then there exists a unique symmetric signed measure
$\lambda\colon\ascript\times\ascript\to\real$ such that $\mu (A)=\lambda (A\times A)$ for all
$A\in\ascript$ \cite{gudms}.

Let $\Omega _n=\brac{\omega _1,\ldots ,\omega _n}$ be a finite outcome space. In this case we take $\ascript _n$ to be the power set $P(\Omega _n)=2^{\Omega _n}$. Any grade-2 additive set function $\mu\colon\ascript _n\to\real ^+$ is a $q$-measure because (C1) and (C2) hold automatically. We call $(\Omega _n,\ascript _n,\mu )$ a \textit{finite} $q$-\textit{measure space}. The study of finite $q$-measure spaces is simplified because by \eqref{eq22} the $q$-measure $\mu$ is determined by its values on singleton sets $\mu (\omega _i)=\mu\paren{\brac{\omega _i}}$, $i=1,\ldots ,n$ and doubleton sets $\mu\paren{\brac{\omega _i,\omega _j}}$, $i<j$,
$i,j=1,\ldots ,n$. We call
\begin{equation}         % equation (2.3)
\label{eq23}
I_{ij}^\mu =\mu\paren{\brac{\omega _i,\omega _j}}-\mu (\omega _i)-\mu (\omega _j)
\end{equation}
the $ij$-\textit{interference term}, $i<j$, $i,j=1,\ldots ,n$ \cite{gudamm}. The \textit{Dirac measure}
$\delta _i$ defined by
\begin{equation*}
\delta _i(A)=
\begin{cases}
1&\hbox{if }\omega _i\in A\\0&\hbox{if }\omega _i\notin A
\end{cases}
\end{equation*}
is clearly a $q$-measure on $\ascript _n$. The product $\deltahatij =\delta _i\delta _j$, $i\ne j$, is not a measure because $\deltahat _{ij}(\omega _i)=\deltahatij (\omega _j)=0$ and 
$\deltahatij\paren{\brac{\omega _i,\omega _j}}=1$. However, we have the following .
\begin{lem}       % Lemma 2.1
\label{lem21}
The product $\deltahatij$, $i\ne j$, is a $q$-measure on $\ascript _n$.
\end{lem}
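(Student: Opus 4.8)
The plan is to strip the statement down to its one nontrivial requirement and then dispatch that requirement in a single line using the structure theory already quoted. Since $\Omega _n$ is finite, the continuity conditions (C1) and (C2) are automatic, as noted in the text, so being a $q$-measure on $\ascript _n$ means precisely being a nonnegative, grade-2 additive set function. Nonnegativity of $\deltahatij$ is immediate, because $\deltahatij (A)=\delta _i(A)\delta _j(A)$ is a product of two $\brac{0,1}$-valued functions. Everything therefore reduces to checking the grade-2 additivity identity \eqref{eq21} for $\deltahatij$.

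For that step I would invoke the representation of $q$-measures by symmetric diagonally positive signed measures quoted above from \cite{gudms}. Define a (positive, hence signed) measure $\lambda$ on the product $\sigma$-algebra $\ascript _n\times\ascript _n$ by $\lambda =\tfrac12\paren{\epsilon _{(\omega _i,\omega _j)}+\epsilon _{(\omega _j,\omega _i)}}$, where $\epsilon _x$ denotes the unit point mass at $x\in\Omega _n\times\Omega _n$. Then $\lambda$ is symmetric by construction, since $\lambda (A\times B)=\tfrac12\sqbrac{\delta _i(A)\delta _j(B)+\delta _j(A)\delta _i(B)}$, and for every $A\in\ascript _n$ we get $\lambda (A\times A)=\delta _i(A)\delta _j(A)=\deltahatij (A)\ge 0$, so $\lambda$ is diagonally positive. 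By the cited result, $\mu (A)=\lambda (A\times A)$ is then a $q$-measure on $\ascript _n$; that is, $\deltahatij$ is a $q$-measure, which is exactly the claim.

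Alternatively, and more self-containedly, I would verify \eqref{eq21} by a direct expansion. Writing $\deltahatij (X)=\delta _i(X)\delta _j(X)$ and using that $\delta _i,\delta _j$ are additive, for disjoint $X,Y\in\ascript _n$ one obtains $\deltahatij (X\cupdot Y)=\deltahatij (X)+\deltahatij (Y)+b(X,Y)$ with $b(X,Y)=\delta _i(X)\delta _j(Y)+\delta _j(X)\delta _i(Y)$, a symmetric form that is additive in each argument over disjoint unions. Substituting this into both sides of \eqref{eq21} for mutually disjoint $A,B,C$, the diagonal terms $\deltahatij (A),\deltahatij (B),\deltahatij (C)$ match and the remaining terms collapse to $b(A,B)+b(A,C)+b(B,C)$ on each side, so the identity holds. (This argument in fact shows that the product of any two signed measures is grade-2 additive.)

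The only genuine obstacle is bookkeeping — keeping the cross terms $\delta _i(\cdot)\delta _j(\cdot)$ organized in the direct expansion — and that disappears as soon as one records the bi-additivity of $b$. I would therefore present the $\lambda$-argument as the main proof, since it is essentially instantaneous given the quoted structure theorem, and relegate the explicit expansion to a remark.
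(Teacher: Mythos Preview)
Both of your arguments are correct, and both differ from the paper's proof. The paper argues by a direct case analysis: it observes that $\deltahatij (A)=1$ exactly when $\brac{\omega _i,\omega _j}\subseteq A$, and then checks \eqref{eq21} by splitting on whether $\brac{\omega _i,\omega _j}\subseteq A\cupdot B\cupdot C$ and, in the affirmative case, on which of $A,B,C$ contain $\omega _i$ and $\omega _j$. Your $\lambda$-argument is shorter but leans on the structure theorem from \cite{gudms} quoted just before the lemma; that is legitimate here since the paper cites that direction as known, but it does import an external fact. Your alternative expansion via the bilinear cross term $b(X,Y)=\delta _i(X)\delta _j(Y)+\delta _j(X)\delta _i(Y)$ is fully self-contained and, as you note, in fact establishes grade-2 additivity for the product of any two (signed) measures; this is precisely the ``more general proof'' the paper alludes to in the sentence immediately following Lemma~\ref{lem21}. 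So the case analysis buys concreteness, while your bi-additivity computation buys generality at no extra cost.
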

\begin{proof}
Notice that
\begin{equation*}
\deltahatij (A)=
\begin{cases}
1&\hbox{if }\brac{\omega _i,\omega _j}\subseteq A\\
  0&\hbox{if }\brac{\omega _i,\omega _j}\not\subseteq A
\end{cases}
\end{equation*}
If $\delta _{ij}(A\cupdot B\cupdot C)=0$, then
$\brac{\omega _i,\omega _j}\not\subseteq A\cupdot B\cupdot C$. Hence,
$\brac{\omega _i,\omega _j}$ is not a subset of $A$, $B$, $C$, $A\cupdot B$, $A\cupdot C$ or
$B\cupdot C$. Hence,
\begin{equation*}
\deltahatij (A\cupdot B)+\deltahatij (A\cupdot C)+\deltahatij (B\cupdot C)
  -\deltahatij (A)-\deltahatij (B)-\deltahatij (C)=0
\end{equation*}
If $\deltahatij (A\cupdot B\cupdot C)=1$, then
$\brac{\omega _i,\omega _j}\subseteq A\cupdot B\cupdot C$. If
$\brac{\omega _i,\omega _j}\subseteq A$, then
\begin{align*}
\deltahatij &(A\cupdot B)+\deltahatij (A\cupdot C)+\deltahatij (B\cupdot C)
  -\deltahatij (A)-\deltahatij (B)-\deltahatij (C)\\
  &=1+1+0-1-0-0=1
\end{align*}
If $\omega _i\in A$, $\omega _j\in B$, then
\begin{align*}
\deltahatij &(A\cupdot B)+\deltahatij (A\cupdot C)+\deltahatij (B\cupdot C)
  -\deltahatij (A)-\deltahatij (B)-\deltahatij (C)\\
  &=1+0+0-0-0-0=1
\end{align*}
The other cases hold by symmetry.
\end{proof}

A more general proof than that in Lemma~\ref{lem21} shows that the product of any two measures is a $q$-measure. A \textit{signed} $q$-\textit{measure} is a set function
$\mu\colon\ascript _n\to\real ^+$ that satisfies \eqref{eq21}. Letting $\sscript (\ascript _n)$ be the set of signed $q$-measures, it is clear that $\sscript (\ascript _n)$ is a real linear space. Also, a signed $q$-measure is determined by its values on singleton and doubleton sets.

\begin{thm}       % Theorem 2.2
\label{thm22}
The $q$-measures
$\delta _1,\ldots ,\delta _n,\deltahat _{12},\deltahat _{13},\ldots ,\deltahat _{n-1,n}$ form a basis for
$\sscript (\ascript _n)$.
\end{thm}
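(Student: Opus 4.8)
The plan is to reduce everything to a short dimension count together with an evaluation argument. By \eqref{eq22}, writing any event as a disjoint union of singletons, a signed $q$-measure $\mu\in\sscript(\ascript_n)$ is completely determined by the $n+\binom{n}{2}$ real numbers $\mu(\omega_i)$ ($i=1,\ldots,n$) and $\mu\paren{\brac{\omega_i,\omega_j}}$ ($i<j$); that is, the linear map sending $\mu$ to this tuple of values is injective, so $\dim\sscript(\ascript_n)\le n+\binom{n}{2}$. The proposed list $\delta_1,\ldots,\delta_n,\deltahat_{12},\deltahat_{13},\ldots,\deltahat_{n-1,n}$ consists of precisely $n+\binom{n}{2}$ vectors of $\sscript(\ascript_n)$ (recall from Lemma~\ref{lem21} that each $\deltahatij\in\sscript(\ascript_n)$). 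Hence it is enough to show these vectors are linearly independent; they will then automatically form a basis.

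For the independence, suppose $\sum_{i=1}^n a_i\delta_i+\sum_{i<j}b_{ij}\deltahatij=0$ as a set function on $\ascript_n$. First I would evaluate this identity on the singletons. Since $\delta_i\paren{\brac{\omega_k}}=1$ when $i=k$ and $0$ otherwise, while $\deltahatij\paren{\brac{\omega_k}}=0$ for every $k$ (a two-element set is never contained in a singleton), evaluation at $\brac{\omega_k}$ yields $a_k=0$ for each $k$. The identity collapses to $\sum_{i<j}b_{ij}\deltahatij=0$. Evaluating now at a doubleton $\brac{\omega_k,\omega_\ell}$ with $k<\ell$, and using that $\brac{\omega_i,\omega_j}\subseteq\brac{\omega_k,\omega_\ell}$ occurs only for $\brac{i,j}=\brac{k,\ell}$, we obtain $b_{k\ell}=0$. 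All coefficients vanish, so the list is linearly independent, and the theorem follows.

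It is worth recording the explicit expansion that this argument furnishes: for an arbitrary $\mu\in\sscript(\ascript_n)$ the combination
\[
\nu=\sum_{i=1}^n\mu(\omega_i)\,\delta_i+\sum_{i<j}I_{ij}^\mu\,\deltahatij
\]
lies in $\sscript(\ascript_n)$ and, by the same singleton and doubleton evaluations, agrees with $\mu$ on all singletons and doubletons; since a signed $q$-measure is determined by those values, $\nu=\mu$. This gives the spanning directly and shows that the coordinates of $\mu$ in the basis are exactly its singleton values and its interference terms \eqref{eq23}.

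There is no serious obstacle here; the one point requiring care is the systematic use of the fact --- already established above --- that a signed $q$-measure is pinned down by its values on singletons and doubletons, which is what licenses both the inequality $\dim\sscript(\ascript_n)\le n+\binom{n}{2}$ and the passage from ``agrees on small sets'' to ``is equal''. Everything else is the elementary observation that testing a linear combination of the $\delta_i$ and $\deltahatij$ against sets of size one and then size two strips off the unknown coefficients one layer at a time.
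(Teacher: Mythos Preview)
Your proof is correct and follows essentially the same approach as the paper: linear independence is established by evaluating at singletons to kill the $a_i$ and then at doubletons to kill the $b_{ij}$, and spanning is obtained via the explicit combination $\sum\mu(\omega_i)\delta_i+\sum I_{ij}^\mu\deltahatij$, which agrees with $\mu$ on singletons and doubletons and hence everywhere. Your added dimension-count wrapper ($\dim\sscript(\ascript_n)\le n+\binom{n}{2}$ from injectivity of the evaluation map) is a mild embellishment but not a genuinely different route.
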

\begin{proof}
To show linear independence, suppose that
\begin{equation*}
\sum _{i=1}^nc_i\delta _i+\sum _{i<j=1}^nd_{ij}\deltahatij =0
\end{equation*}
Evaluating at $\brac{\omega _k}$ gives $c_k=0$, $k=1,\ldots ,n$. Hence,
\begin{equation*}
\sum _{i<j=1}^nd_{ij}\deltahatij =0
\end{equation*}
and evaluating at $\brac{\omega _r,\omega _s}$ gives $d_{rs}=0$, $r<s$.  This proves linear independence. To show that these $q$-measures span $\sscript (\ascript _n)$, let
$\mu\in\sscript (\ascript _n)$. Defining $I_{ij}^\mu$ as in \eqref{eq23}, we have that
\begin{equation}         % equation (2.4)
\label{eq24}
\mu =\sum _{i=1}^n\mu (\omega _i)\delta _i+\sum _{i<j=1}^nI_{ij}^\mu\deltahatij 
\end{equation}
because both sides agree on singleton and doubleton sets.
\end{proof}

We conclude from Theorem~\ref{thm22} that 
\begin{equation*}
\dim\sscript (\ascript _n)=n+\binom{n}{2}=n(n+1)/2
\end{equation*}
Also, if $\mu$ is a $q$-measure on $\ascript _n$, then $\mu$ has the unique representation given by \eqref{eq24}. The next theorem proves a result that we already mentioned for the particular case of a finite $\Omega$.

\begin{thm}       % Theorem 2.3
\label{thm23}
If $\mu$ is a $q$-measure on $\ascript _n$, then there exists a unique symmetric signed measure
$\lambda$ on $\ascript _n\times\ascript _n$ such  $\mu (A)=\lambda (A\times A)$ for all
$A\in\ascript _n$.
\end{thm}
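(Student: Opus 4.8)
The plan is to use the basis from Theorem~\ref{thm22} and build $\lambda$ by specifying a symmetric signed measure matching each basis element, then extend by linearity. First I would observe that since $\mu$ is a $q$-measure on $\ascript_n$, it admits the unique representation \eqref{eq24}, namely $\mu = \sum_i \mu(\omega_i)\delta_i + \sum_{i<j} I^\mu_{ij}\deltahatij$. The idea is that each $\delta_i$ and each $\deltahatij$ arises from a symmetric signed measure on $\ascript_n\times\ascript_n$ in an evident way, and since $\ascript_n\times\ascript_n = 2^{\Omega_n\times\Omega_n}$ is itself a power set, signed measures on it are just arbitrary real-valued functions on the $n^2$ ``cells'' $\{(\omega_k,\omega_\ell)\}$.

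Next I would exhibit the building blocks. For $\delta_i$, set $\lambda_i$ to be the point mass at $(\omega_i,\omega_i)$; then $\lambda_i(A\times A) = 1$ iff $\omega_i\in A$, i.e. $\lambda_i(A\times A) = \delta_i(A)$, and $\lambda_i$ is trivially symmetric. For $\deltahatij$ with $i\ne j$, set $\lambda_{ij}$ to be the measure assigning $\tfrac12$ to each of the two cells $(\omega_i,\omega_j)$ and $(\omega_j,\omega_i)$ (and $0$ elsewhere); this is symmetric, and $\lambda_{ij}(A\times A) = 1$ iff both $\omega_i\in A$ and $\omega_j\in A$, i.e. $\lambda_{ij}(A\times A) = \deltahatij(A)$. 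Then define
\begin{equation*}
\lambda = \sum_{i=1}^n \mu(\omega_i)\lambda_i + \sum_{i<j=1}^n I^\mu_{ij}\lambda_{ij}.
\end{equation*}
This $\lambda$ is a finite real linear combination of symmetric signed measures, hence a symmetric signed measure on $\ascript_n\times\ascript_n$, and for any $A\in\ascript_n$, using \eqref{eq24},
\begin{equation*}
\lambda(A\times A) = \sum_i \mu(\omega_i)\delta_i(A) + \sum_{i<j} I^\mu_{ij}\deltahatij(A) = \mu(A).
\end{equation*}

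For uniqueness, suppose $\lambda,\lambda'$ are both symmetric signed measures with $\lambda(A\times A) = \lambda'(A\times A) = \mu(A)$ for all $A$; set $\eta = \lambda - \lambda'$, a symmetric signed measure with $\eta(A\times A) = 0$ for all $A\in\ascript_n$. I must show $\eta = 0$ on all of $\ascript_n\times\ascript_n$, equivalently $\eta(\{(\omega_k,\omega_\ell)\}) = 0$ for every pair $k,\ell$. Taking $A = \{\omega_k\}$ gives $\eta(\{(\omega_k,\omega_k)\}) = 0$. Taking $A = \{\omega_k,\omega_\ell\}$ with $k\ne \ell$ and expanding $A\times A$ as the disjoint union of the four cells $(\omega_k,\omega_k),(\omega_k,\omega_\ell),(\omega_\ell,\omega_k),(\omega_\ell,\omega_\ell)$, additivity of $\eta$ plus the two diagonal terms already being zero forces $\eta(\{(\omega_k,\omega_\ell)\}) + \eta(\{(\omega_\ell,\omega_k)\}) = 0$; symmetry of $\eta$ then gives $2\eta(\{(\omega_k,\omega_\ell)\}) = 0$, so each off-diagonal cell vanishes as well. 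Hence $\eta = 0$ and $\lambda = \lambda'$. I do not expect a serious obstacle here; the only point requiring a little care is the uniqueness argument, where one must genuinely use symmetry of $\lambda$ to pin down the individual off-diagonal values rather than merely their symmetric sums.
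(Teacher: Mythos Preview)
Your proof is correct and constructs exactly the same $\lambda$ as the paper does (the paper defines $\alpha(\omega_i,\omega_i)=\mu(\omega_i)$, $\alpha(\omega_i,\omega_j)=\tfrac12 I^\mu_{ij}$ and sets $\lambda(\Delta)=\sum_{(\omega_i,\omega_j)\in\Delta}\alpha(\omega_i,\omega_j)$). The only difference is organizational: the paper verifies $\lambda(A\times A)=\mu(A)$ by a direct expansion using \eqref{eq22}, whereas you invoke the basis representation \eqref{eq24} from Theorem~\ref{thm22} and linearity to get the same conclusion without that computation; the uniqueness arguments are identical.
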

\begin{proof}
Define the function $\alpha\colon\Omega _n\times\Omega _n\to\real$ by
$\alpha (\omega _i,\omega _i)=\mu (\omega _i)$, $i=1,\ldots ,n$, and
\begin{equation*}
\alpha (\omega _i,\omega _j)=\alpha (\omega _j,\omega _i)=\tfrac{1}{2}I_{ij}^\mu
\end{equation*}
$i<j$, $i,j=1,\ldots ,n$. Define the signed measure $\lambda$ on $\ascript _n\times\ascript _n$ by
\begin{equation*}
\lambda (\Delta )=\brac{\sum\alpha (\omega _i,\omega _j)\colon (\omega _i,\omega _j)\in\Delta}
\end{equation*}
Now $\lambda$ is symmetric because
\begin{align*}
\lambda (A\times B)
  &=\brac{\sum\alpha (\omega _i,\omega _j)\colon (\omega _i,\omega _j)\in A\times B}\\
  &=\brac{\sum\alpha (\omega _i,\omega _j)\colon\omega _i\in A,\omega _j\in B}\\
  &=\brac{\sum\alpha (\omega _j,\omega _i)\colon\omega _i\in A,\omega _j\in B}\\
  &=\brac{\sum\alpha (\omega _j,\omega _i)\colon (\omega _j,\omega _i)\in B\times A}
  =\lambda (B\times A)\\
\end{align*}
We now show that $\lambda (A\times A)=\mu (A)$ for all $A\in\ascript _n$. We can assume without loss of generality that $A=\brac{\omega _1,\ldots ,\omega _m}$, $2\le m\le n$. It follows from
\eqref{eq22} that
\begin{equation*}
\mu (A)=\sum _{i<j=1}^m\mu\paren{\brac{\omega _i,\omega _j}}
  -(m-2)\sum _{i=1}^m\mu (\omega _i)
\end{equation*}
Moreover, we have that
\begin{align*}
\lambda (A\times A)&=\brac{\sum\alpha (\omega _i,\omega _j)\colon\omega _i,\omega _j\in A}\\
  &=\sum _{i=1}^m\alpha (\omega _i,\omega _i)+2\sum _{i<j=1}^m\alpha (\omega _i,\omega _j)\\
  &=\sum _{i=1}^m\mu (\omega _i)+\sum _{i<j=1}^mI_{ij}^\mu\\
  &=\sum _{i=1}^m\mu (\omega _i)+\sum _{i<j=1}^m\mu\paren{\brac{\omega _i,\omega _j}}
  -\sum _{i<j=1}^m\sqbrac{\mu (\omega _i)-\mu (\omega _j)}\\
  &=\sum _{i=1}^m\mu (\omega _i)+\sum _{i<j=1}^m\mu\paren{\brac{\omega _i,\omega _j}}
  -(m-1)\sum _{i=1}^m\mu (\omega _i)\\
  &=\sum _{i<j=1}^m\mu\paren{\brac{\omega _i,\omega _j}}-(m-2)\sum _{i=1}^m\mu (\omega _i)
  =\mu (A)
\end{align*}
To prove uniqueness, suppose $\lambda '\colon\ascript _n\times\ascript _n\to\real$ is a symmetric signed measure satisfying $\lambda '(A\times A)=\mu (A)$ for every $A\in\ascript _n$. We then have that
\begin{equation*}
\lambda '\sqbrac{(\omega _i,\omega _i)}=\lambda '\sqbrac{\brac{\omega _i}\times\brac{\omega _i}}
  =\mu (\omega _i)=\lambda\sqbrac{(\omega _i,\omega _i)}
\end{equation*}
for $i=1,\ldots ,n$. Moreover, letting $A=\brac{(\omega _i,\omega _j)}$ we have that
\begin{align*}
\mu (A)&=\lambda '(A\times A)
  =\lambda '\sqbrac{(\omega _i,\omega _i)}+\lambda '\sqbrac{(\omega _j,\omega _j)}
  +2\lambda '\sqbrac{(\omega _i,\omega _j)}\\
  &=\lambda\sqbrac{(\omega _i,\omega _i)}+\lambda\sqbrac{(\omega _j,\omega _j)}
  +2\lambda\sqbrac{(\omega _i,\omega _j)}
\end{align*}
Hence, $\lambda '\sqbrac{(\omega _i,\omega _j)}=\lambda\sqbrac{(\omega _i,\omega _j)}$ and since signed measures are determined by their values on singleton sets, we have that
$\lambda '=\lambda$.
\end{proof}

Let $(\Omega ,\ascript ,\mu )$ be a $q$-measure space and let $f\colon\Omega\to\real$ be a measurable function. The $q$-\textit{integral} of $f$ is defined by
\begin{equation}         % equation (2.5)
\label{eq25}
\int fd\mu =\int _0^\infty\mu\sqbrac{f^{-1}(\lambda ,\infty )}d\lambda
  -\int _0^\infty\mu\sqbrac{f^{-1}(-\infty ,-\lambda )}d\lambda
\end{equation}
where $d\lambda$ denotes Lebesgue measure on $\real$ \cite{gud091}. Any measurable function $f\colon\Omega\to\real$ has a unique representation $f=f_1-f_2$ where $f_1,f_2\ge 0$ are measurable and $f_1f_2=0$. It follows that
\begin{equation}         % equation (2.6)
\label{eq26}
\int fd\mu =\int f_1d\mu -\int f_2d\mu
\end{equation}
Because of \eqref{eq26} we only need to consider $q$-integrals of nonnegative functions. As usual in integration theory, if $A\in\ascript$ we define $\displaystyle{\int _Afd\mu =\int f\chi _Ad\mu}$ where $\chi _A$ is the characteristic function for $A$.

Any nonnegative simple measurable function $f\colon\Omega\to\real ^+$ has the canonical representation
\begin{equation}         % equation (2.7)
\label{eq27}
f=\sum _{i=1}^n\alpha _i\chi _{A_i}
\end{equation}
where $A_i\cap A_j=\emptyset$, $i\ne j$, $\cup A_i=\Omega$ and
$0\le\alpha _1<\alpha _2\cdots <\alpha _n$. It follows from \eqref{eq25} that
\begin{align}         % equation (2.8)
\label{eq28}
\int fd\mu =\alpha _1&\sqbrac{\mu\paren{\bigcupdotionen A_i}-\mu\paren{\bigcupdotitwon A_i}}
  \notag\\
 &+\cdots +\alpha _{n-1}\sqbrac{\mu\paren{A_{n-1}\cupdot A_n}-\mu (A_n)}
  +\alpha _n\mu (A_n)
\end{align}

\begin{exam}{1}    % Example 1
Let $f\colon\Omega _n\to\real ^+$ be a nonnegative function on $\Omega _n$ and let
$\mu\colon\ascript _n\to\real ^+$ be a $q$-measure. Also, let
$\lambda\colon\ascript _n\times\ascript _n\to\real$ be the unique symmetric signed measure such that $\mu (A)=\lambda (A\times A)$ for every $A\in\ascript _n$. We can assume without loss of generality that
\begin{equation*}
0\le f(\omega _1)\le f(\omega _2)\le\cdots\le f(\omega _n)
\end{equation*}
By \eqref{eq28} we have for $i<j$ that
\begin{align*}
\int fd\deltahatij &=f(\omega _1)
  \sqbrac{\deltahatij\paren{\brac{\omega _1,\ldots ,\omega _n}}
  -\deltahatij\paren{\brac{\omega _2,\ldots ,\omega _n}}}\\
  &\quad +\cdots +f(\omega _{n-1})\sqbrac{\deltahatij\paren{\brac{\omega _{n-1},\omega _n}}
  -\deltahatij (\omega _n)}+f(\omega _n)\deltahatij (\omega _n)\\
  &=f(\omega _i)=\min\paren{f(\omega _i),f(\omega _j)}
\end{align*}
By \eqref{eq24} and the proof of Theorem~\ref{thm23}, it follows that
\begin{align}         % equation (2.9)
\label{eq29}
\int fd\mu
  &=\int fd\sqbrac{\sum _{i=1}^n\mu (\omega _i)\delta _i+\sum _{i<j=1}^nI_{ij}^\mu\deltahatij}
  \notag\\
  &=\sum _{i=1}^n\mu (\omega _i)f(\omega _i)
  +\sum _{i<j=1}^nI_{ij}^\mu\min\paren{f(\omega _i),f(\omega _j)}\notag\\
  &=\sum _{i,j=1}^n\min\paren{f(\omega _i),f(\omega _j)}\lambda\sqbrac{(\omega _i,\omega _j)}
  \notag\\
 &=\int\min\paren{f(\omega ),f(\omega ')}d\lambda (\omega ,\omega ')
\end{align}
\end{exam}

\begin{exam}{2}    % Example 2
Let $\nu\colon\ascript _n\to\complex$ be a complex measure and define the $q$-measure
$\mu\colon\ascript _n\to\real ^+$ by $\mu (A)=\ab{\nu (A)}^2$. Then for $i<j$ we have
\begin{equation*}
I_{ij}^\mu=\ab{\nu (\omega _i)+\nu (\omega _j)}^2-\ab{\nu (\omega _i)}^2-\ab{\nu (\omega _j)}^2
  =2\rmre\nu (\omega _i)\overline{\nu (\omega _j)}
\end{equation*}
By \eqref{eq24} we conclude that
\begin{align*}
\mu&=\sum _{i=1}^n\mu (\omega _i)\delta _i
  +\sum _{i<j=1}^m\sqbrac{2\rmre\nu (\omega _i)\overline{\nu (\omega _j)}}\deltahatij\\
  &=\sum _{i=1}^n\ab{\nu (\omega _i)}^2\delta _i
  +2\rmre\sum _{i<j=1}^n\nu (\omega _i)\overline{\nu (\omega _j)}\delta _i\delta _j\\
  &=\ab{\sum _{i=1}^n\nu (\omega _i)\delta _i}^2
\end{align*}
As in Example~1, we obtain
\begin{equation*}
\int fd\mu =\sum _{i=1}^n\mu (\omega _i)f(\omega _i)
  +\sum _{i<j=1}^n\sqbrac{2\rmre\nu (\omega _i)\overline{\nu (\omega _j)}}
  \min\paren{f(\omega _i),f(\omega _j)}
\end{equation*}
\end{exam}
\medskip

Equation~\eqref{eq29} suggests the following new characterization of the $q$-integral. If
$\int\ab{f}d\mu <\infty$, then $f$ is \textit{integrable}.

\begin{thm}       % Theorem 2.4
\label{thm24}
Let $(\Omega ,\ascript ,\mu )$ be a $q$-measure space and let $f\colon\Omega\to\real ^+$ be integrable. If $\lambda\colon\ascript\times\ascript\to\real$ is the unique symmetric signed measure such that $\mu (A)=\lambda (A\times A)$ for all $A\in\ascript$, then
\begin{equation}         % equation (2.10)
\label{eq210}
\int fd\mu =\int\min\paren{f(\omega ),f(\omega ')}d\lambda (\omega ,\omega ')
\end{equation}
\end{thm}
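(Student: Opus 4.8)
The plan is to establish \eqref{eq210} first for simple functions by a direct computation with \eqref{eq28}, and then to obtain the general case by approximation; an equivalent and slicker route runs through the layer-cake identity $\min (a,b)=\int _0^\infty\chi _{(t,\infty )}(a)\,\chi _{(t,\infty )}(b)\,dt$ for $a,b\ge 0$, which I will use for the limiting step.

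\emph{Simple functions.} Let $f=\sum _{i=1}^m\alpha _i\chi _{A_i}$ be the canonical representation \eqref{eq27}, so the $A_i$ are disjoint, $\cup A_i=\Omega$, and $0\le\alpha _1<\cdots <\alpha _m$. On $A_i\times A_j$ the function $(\omega ,\omega ')\mapsto\min\paren{f(\omega ),f(\omega ')}$ is constant with value $\alpha _{\min (i,j)}$, so
\begin{equation*}
\int\min\paren{f(\omega ),f(\omega ')}\,d\lambda (\omega ,\omega ')
 =\sum _{i,j=1}^m\alpha _{\min (i,j)}\,\lambda (A_i\times A_j).
\end{equation*}
On the other hand, with $B_k=A_k\cupdot\cdots\cupdot A_m$ and $B_{m+1}=\emptyset$, \eqref{eq28} telescopes to $\int fd\mu =\sum _{k=1}^m\alpha _k\sqbrac{\mu (B_k)-\mu (B_{k+1})}$; substituting $\mu (B_k)=\lambda (B_k\times B_k)=\sum _{i,j\ge k}\lambda (A_i\times A_j)$ and using the symmetry of $\lambda$, a short computation shows that both sides equal $\sum _{k=1}^m\alpha _k\sqbrac{\lambda (A_k\times A_k)+2\sum _{j>k}\lambda (A_k\times A_j)}$. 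This elementary identity is the whole of the simple case. (When $\Omega$ is finite it is just Example~1 rephrased, since $\int fd\deltahatij =\min\paren{f(\omega _i),f(\omega _j)}$ and $\mu\mapsto\int fd\mu$ is linear by \eqref{eq25}.)

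\emph{The general case.} For nonnegative integrable $f$ pick simple $f_n\uparrow f$. Fixing $t\ge 0$, the sets $E_t^{(n)}=f_n^{-1}(t,\infty )$ increase to $E_t=f^{-1}(t,\infty )$, so $\mu (E_t^{(n)})\to\mu (E_t)$ by (C1); since $f\ge 0$ the second integral in \eqref{eq25} drops out, and a monotone convergence argument for the $q$-integral yields $\int f_n\,d\mu =\int _0^\infty\mu (E_t^{(n)})\,dt\to\int _0^\infty\mu (E_t)\,dt=\int fd\mu$. By the simple-function identity $\int\min (f_n,f_n)\,d\lambda =\int f_n\,d\mu$, so it remains to check $\int\min (f_n,f_n)\,d\lambda\to\int\min (f,f)\,d\lambda$. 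Write the Jordan decomposition $\lambda =\lambda ^+-\lambda ^-$; by uniqueness $\lambda ^+$ and $\lambda ^-$ are again symmetric, so $\mu ^\pm (A):=\lambda ^\pm (A\times A)$ are $q$-measures (each $\lambda ^\pm$ being a nonnegative, hence diagonally positive, symmetric signed measure) with $\mu =\mu ^+-\mu ^-$. The ordinary monotone convergence theorem applied to $\lambda ^+$ and to $\lambda ^-$ gives $\int\min (f_n,f_n)\,d\lambda ^\pm\to\int\min (f,f)\,d\lambda ^\pm$, hence $\int\min (f_n,f_n)\,d\lambda\to\int\min (f,f)\,d\lambda$, and combining the two limits gives \eqref{eq210}.

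\emph{Main obstacle.} The one nontrivial point buried in the last step is the passage to the limit for the signed measure: the splitting $\int\min (f_n,f_n)\,d\lambda=\int\min (f_n,f_n)\,d\lambda ^+-\int\min (f_n,f_n)\,d\lambda ^-$ survives the limit only when $\int\min (f,f)\,d\lambda ^+$ and $\int\min (f,f)\,d\lambda ^-$ are finite, i.e.\ only when $\min\paren{f(\omega ),f(\omega ')}$ is $\ab{\lambda}$-integrable, and this has to be extracted from the hypothesis $\int\ab{f}d\mu <\infty$ (it is automatic when $f$ is bounded, since $\ab{\lambda}$ is a finite measure, but needs care in general). Everything else is routine. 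One can also bypass the $q$-integral convergence theorem entirely: applying Tonelli's theorem to the nonnegative function $(\omega ,\omega ',t)\mapsto\chi _{E_t\times E_t}(\omega ,\omega ')$ against $\lambda ^+$ and against $\lambda ^-$ gives, with no hypotheses, $\int\min (f,f)\,d\lambda ^\pm=\int _0^\infty\lambda ^\pm (E_t\times E_t)\,dt=\int _0^\infty\mu ^\pm (E_t)\,dt$ in $[0,\infty ]$, after which the proof again reduces to the same finiteness fact, which is where the real work lies.
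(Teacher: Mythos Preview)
Your argument has the same two-step architecture as the paper's: verify \eqref{eq210} for nonnegative simple functions by direct computation from \eqref{eq28}, then pass to the limit. The simple-function step is substantively identical (the paper writes out the same telescoping expansion at greater length). For the limit the paper is terser than you are: it simply invokes ``the quantum dominated monotone convergence theorem'' from \cite{gud091} and stops. You instead unpack the passage to the limit via the Jordan decomposition $\lambda=\lambda^{+}-\lambda^{-}$, noting (correctly, by the flip map and uniqueness of the Jordan decomposition) that $\lambda^{\pm}$ inherit the symmetry, and then applying ordinary monotone convergence to each piece. This buys something real: it isolates the one analytic point the paper hides behind its citation, namely the $\ab{\lambda}$-integrability of $(\omega,\omega')\mapsto\min\paren{f(\omega),f(\omega')}$, which you honestly flag as ``where the real work lies'' but do not settle---and the paper does not settle it on the page either. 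One small caution: your ``monotone convergence argument for the $q$-integral'' is itself essentially the theorem the paper cites, since pointwise convergence $\mu(E_t^{(n)})\to\mu(E_t)$ from (C1) does not by itself give convergence of $\int_0^\infty\mu(E_t^{(n)})\,dt$ (a $q$-measure need not be monotone in its argument); your Tonelli alternative is the cleaner way around this.
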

\begin{proof}
Let $f$ be a nonnegative simple function on $\Omega$ with canonical representation \eqref{eq27}. If $g(\omega ,\omega ')=\min\paren{f(\omega ),f(\omega ')}$, then when $\omega\in A_i$,
$\omega '\in A_j$ we have that
\begin{equation*}
g(\omega ,\omega ')=
\begin{cases}
\alpha _i&\hbox{if }i\le j\\\alpha _j&\hbox{if }j<i
\end{cases}
\end{equation*}
Hence,
\begin{equation*}
g=\sum _{i\le j=1}^n\alpha _i\chi _{A_i\times A_j}+\sum _{i>j=1}^n\alpha _j\chi _{A_i\times A_j}
\end{equation*}
It follows that
\begin{align*}
\int g(\omega ,\omega ')d\lambda (\omega ,\omega ')
  &=\sum _{i\le j=1}^n\alpha _i\lambda (A_i\times A_i)
  +\sum _{i>j=1}^n\alpha _j\lambda (A_i\times A_j)\\
  &=\alpha _1
  \sqbrac{\lambda (A_1\times A_1)+2\lambda (A_1\times A_2)+\cdots +2\lambda (A_1\times A_n)}\\
  &\quad +\alpha _2
  \sqbrac{\lambda (A_2\times A_2)+2\lambda (A_2\times A_3)+\cdots +2\lambda (A_2\times A_n)}\\
 &\quad\ \vdots\\
 &\quad +\alpha _{n-1}
 \sqbrac{\lambda (A_{n-1}\times _{n-1})+2\lambda (A_{n-1}\times A_n)}\\
 &\quad +\alpha _n\lambda (A_n\times A_n)
\end{align*}
On the other hand, by \eqref{eq28} and grade-2 additivity we have
\begin{align*}
\int fd\mu&=\alpha _1\left [\mu (A_1\cupdot A_2)+\cdots +\mu (A_1\cupdot A_n)\right.\\
  &\quad\left.-(n-1)\mu (A_1)-\mu (A_2)-\cdots -\mu (A_n)\right ]\\
  &\quad +\alpha _2\left [\mu (A_2\cupdot A_3)+\cdots +\mu (A_2\cupdot A_n)\right.\\
  &\quad\left.-(n-2)\mu (A_2)-\mu (A_3)-\cdots -\mu (A_n)\right ]\\
  &\quad\ \vdots\\
  &\quad +\alpha _{n-1}\sqbrac{\mu (A_{n-1}\cupdot A_n)-\mu (A_n)}+\alpha _n\mu (A_n)\\
  &=\alpha _1\left [\lambda (A_1\cupdot A_2\times A_1\cupdot A_2)+\cdots +
  \lambda (A_1\cupdot A_n\times A_1\cupdot A_n)\right.\\
  &\quad\left.-(n-1)\mu (A_1)-\mu (A_2)-\cdots -\mu (A_n)\right ]\\
  &\quad +\alpha _2\left [\lambda (A_2\cupdot A_3\times A_2\cupdot A_3)+\cdots +
  \lambda (A_2\cupdot A_n\times A_2\cupdot A_n)\right.\\
  &\quad\left.-(n-2)\mu (A_2)-\mu (A_3)-\cdots -\mu (A_n)\right ]\\
  &\quad\ \vdots\\
  &\quad +\alpha _{n-1}\sqbrac{\lambda (A_{n-1}\cupdot A_n\times A_{n-1}\cupdot A_n)-\mu (A_n)}
  +\alpha _n\mu (A_n)\\
  &=\alpha _1\left [\lambda (A_1\times A_1)+2\lambda (A_1\times A_2)+\lambda (A_2\times A_2)
  +\cdots +\lambda (A_1\times A_1)\right.\\
  &\quad\left.+2\lambda (A_1\times A_n)+\lambda (A_n\times A_n)\right.\\
  &\quad\left.-(n-1)\mu (A_1)-\mu (A_2)-\cdots -\mu (A_n)\right ]\\
  &\quad +\alpha _2\left [\lambda (A_2\times A_2)+2(A_2\times A_3)+\lambda (A_3\times A_3)
  +\cdots +\lambda (A_2\times A_2)\right.\\
  &\quad\left. +2\lambda (A_2\times A_n)+\lambda (A_n\times A_n)\right.\\
  &\quad\left. -(n-2)\mu (A_2)-\mu (A_3)-\cdots -\mu (A_n)\right ]\\
  &\quad\ \vdots\\
  &\quad +\alpha _{n-1}\sqbrac{\lambda (A_{n-1}\times A_{n-1})+2\lambda (A_{n-1}\times A_n)
  +\lambda (A_n\times A_n)-\mu (A_n)}\\
  &\quad +\alpha _n\mu (A_n)
\end{align*}
which reduces to the expression given for $\int g(\omega ,\omega ')d\lambda (\omega ,\omega ')$. We conclude that \eqref{eq210} holds for nonnegative measurable simple functions. Since $f$ is the limit of an increasing sequence of such functions, the result follows from the quantum dominated monotone convergence theorem \cite{gud091}.
\end{proof}

We now apply Theorem~\ref{thm24} to compute some $q$-integrals for an interesting
$q$-measure. Let  $\Omega =[0,1]\subseteq\real$, let $\nu$ be Lebesgue measure on $\Omega$ and define the $q$-measure $\mu$ on the Borel $\sigma$-algebra $\bscript (\real )$ by
$\mu (A)=\nu (A)^2$. We call $\mu$ the \textit{quantum (Lebesgue)}${}^2$ \textit{measure}. By Theorem~\ref{thm24}, if $f\colon\Omega\to\real ^+$ is integrable and $0\le a<b\le 1$, we have
\begin{equation}         % equation (2.11)
\label{eq211}
\int _a^bfd\mu =\int _{[a,b]}fd\mu =\int _a^b\int _a^b\min\paren{f(x),f(y)}dxdy
\end{equation}
Suppose $f$ is increasing on $\Omega$ and let $g_y(x)=\min\paren{f(x),f(y)}$. Then
\begin{equation*}
g_u(x)=
\begin{cases}
f(x)&\hbox{for  }x\le y\\ f(y)&\hbox{for }x\ge y
\end{cases}
\end{equation*}
Hence, by \eqref{eq211} we have
\begin{equation}         % equation (2.12)
\label{eq212}
\int _a^bfd\mu =\int _a^b\int _a^bg_y(x)dxdy=\int _a^b\sqbrac{\int _a^yf(x)dx+f(y)(b-y)}dy
\end{equation}
Since
\begin{equation*}
\int _a^b\int _a^yf(x)dxdy=\int _a^b\int _x^bf(x)dydx=\int _a^bf(x)(b-x)dx
\end{equation*}
we conclude from \eqref{eq212} that
\begin{equation}         % equation (2.13)
\label{eq213}
\int _a^bfd\mu =2\int _a^b\int _a^yf(x)dxdy=2\int _a^bf(x)(b-x)dx
\end{equation}
Similarly, if $f$ is decreasing, then
\begin{equation}         % equation (2.14)
\label{eq214}
\int _a^bfd\mu =\int _a^b\sqbrac{f(y)(y-a)+\int _y^bf(x)dx}dy
\end{equation}
Since
\begin{equation*}
\int _a^b\int _y^bf(x)dxdy=\int _a^b\int _a^xf(x)dydx=\int _a^bf(x)(x-a)dx
\end{equation*}
\eqref{eq214} becomes
\begin{equation}         % equation (2.15)
\label{eq215}
\int _a^bfd\mu =2\int _a^b\int _y^bf(x)dxdy=2\int _a^bf(x)(x-a)dx
\end{equation}

\begin{exam}{3}    % Example 3
For $n\ge 0$, since $f(x)=x^n$ is increasing, by \eqref{eq213} we have
\begin{align*}
\int _a^bx^nd\mu&=2\int _a^bx^n(b-x)dx\\
  &=\frac{2}{(n+1)(n+2)}\sqbrac{b^{n+2}-a^{n+1}\paren{(n+2)b-(n+1)a}}
\end{align*}
\end{exam}

\begin{exam}{4}    % Example 4
Since $f(x)=e^x$ is increasing, by \eqref{eq213} we have
\begin{equation*}
\int _a^be^xdx=2\int _a^be^x(b-x)dx=2e^b-2e^a(b-a+1)
\end{equation*}
\end{exam}

\begin{exam}{5}    % Example 5
For $n\ge 3$, since $f(x)=x^{-n}$ is decreasing, by (2.15) we have
\begin{align*}
\int _a^bx^{-n}d\mu&=2\int _a^bx^{-n}(x-a)dx\\
  &=\frac{2}{(n-1)(n-2)}\sqbrac{b^{-n+1}\paren{(n-2)a-(n-1)b}+a^{-n+2}}
\end{align*}
\end{exam}

\section{Anhomomorphic Logics} % Section 3
In this and the next section we shall restrict our attention to a finite outcome space
$\Omega _n=\brac{\omega _1,\ldots ,\omega _n}$ and its corresponding set of events
$\ascript _n=P(\Omega _n)$. Let $\integers _2$ be the two element Boolean algebra $\brac{0,1}$ with the usual multiplication and with addition given by $0\oplus 1=1\oplus 0=1$ and
$0\oplus 0=1\oplus 1=0$. A \textit{coevent} on $\ascript _n$ is a truth function
$\phi\colon\ascript _n\to\integers _2$ such that $\phi (\emptyset )=0$ \cite{gt09, sor941, sor942}. A coevent $\phi$ corresponds to a potential reality for a physical system in the sense that $\phi (A)=1$ if $A$ occurs and $\phi (A)=0$ if $A$ does not occur. For a classical system a coevent
$\phi$ is taken to be a \textit{homomorphism} by which we mean that

\begin{list} {(H\arabic{cond})}{\usecounter{cond}
\setlength{\rightmargin}{\leftmargin}}
%(H1)
\item $\phi (\Omega _n)=1$\enspace (unital)
%(H2)
\item $\phi (A\cupdot B)=\phi (A)\oplus\phi (B)$\enspace (additive)
%(H3)
\item $\phi (A\cap B)=\phi (A)\phi (B)$\enspace (multiplicative)
\end{list}

Since there are various quantum systems for which the truth function is not a homomorphism
\cite{gt09, sor941}, at least one of these condition must fail. Denoting the set of coevents by
$\ascript _n^*$, since the elements of $\ascript _n^*$ are not all homomorphisms we call
$\ascript _n^*$ the \textit{full anhomomorphic logic} \cite{gud092}. Notice that the cardinality
$\ab{\ascript _n^*}=2^{2^n-1}$ is very large.

Corresponding to $\omega _i\in\Omega _n$ we define the \textit{evaluation map} $\omega _i^*\colon\ascript _n\to\integers _2$ by $\omega _i^*(A)=1$ if and only if $\omega _i\in A$. It can be shown that a coevent $\phi$ is a homomorphism if and only if $\phi =\omega _i^*$ for some $i=1,\ldots ,n$ \cite{gt09, gud092}. Thus, there are only $n$ possible realities for a classical system. For two coevents $\phi$, $\psi$ we define their sum and product in the usual way by
$(\phi\oplus\psi )(A)=\phi (A)\oplus\psi (A)$ and $(\phi\psi )(A)=\phi (A)\psi (A)$ for all
$A\in\ascript _n$. It can be shown that any coevent has a unique representation (up to order of the terms) as a polynomial in the evaluation maps.

A coevent that satisfies (H2) is called \textit{additive} and $\phi$ is additive if and only if it has the form
\begin{equation*}
\phi = c_1\omega _1^*\oplus\cdots\oplus c_n\omega _n^*
\end{equation*}
where $c_i=0$ or $1$, $i=1,\ldots ,n$ \cite{gt09, gud092, sor941}. Denoting the set of additive coevents by $\ascript _{n,a}^*$, we see that $\ab{\ascript _{n,a}^*}=2^n$. A coevent that satisfies (H3) is called \textit{multiplicative} and $\phi$ is multiplicative if and only if it has the form
\begin{equation*}
\phi =\omega _{i_1}^*\omega _{i_2}^*\cdots\omega _{i_m}^*
\end{equation*}
where, by convention, $\phi =0$ if there are no terms in the product \cite{gt09, gud092, sor09}. Denoting the set of multiplicative coevents by $\ascript _{n,m}^*$ we again have that
$\ab{\ascript _{n,m}^*}=2^n$. A coevent $\phi$ is \textit{quadratic} or \textit{grade}-2 if it satisfies
\begin{equation*}
\phi (A\cupdot B\cupdot C)=\phi (A\cupdot B)\oplus\phi (A\cupdot C)\oplus\phi (B\cupdot C)
  \oplus\phi (A)\oplus\phi (B)\oplus\phi (C)
\end{equation*}
It can be shown that $\phi$ is quadratic if and only  if it has the form
\begin{equation*}
\phi = c_1\omega _1^*\oplus\cdots\oplus c_n\omega _n^*\oplus d_{12}\omega _1^*\omega _2^*
  \oplus\cdots\oplus d_{n-1,n}\omega _{n-1}^*\omega _n^*
\end{equation*}
where $c_i=0$ or $1$, $i=1,\ldots ,n$, and $d_{ij}=0$ or $1$, $i<j$, $i,j=1,\ldots ,n$
\cite{gt09, gud092}. Denoting the set of quadratic coevents by $\ascript _{n,q}^*$, we have that
$\ab{\ascript _{n,q}^*}=2^{n(n+1)/2}$.

Let $\bscript$ be a Boolean algebra of subsets of a set. The set-theoretic operations on $\bscript$ are
$\cup$, $\cap$ and ${}'$ where $A'$ denotes the complement of $A\in\bscript$. A subset
$\bscript _0\subseteq\bscript$ is called a \textit{subalgebra} (or \textit{subring}) of $\bscript$ if
$\emptyset\in\bscript _0$, $A\cup B$, $A\cap B$, $A\smallsetminus B\in\bscript _0$ whenever
$A,B\in\bscript _0$ where $A\smallsetminus B=A\cap B'$. If $\bscript _0$ is finite, then $\bscript _0$ has a largest element $C$ and $\bscript _0$ is itself a Boolean algebra under the operations $\cup$,
$\cap$ and complement $A'=C\smallsetminus A$. 

For $\phi\in\ascript _n^*$, it is of interest to find subalgebras of $\ascript _n$ on which $\phi$ acts classically. If $\bscript _0$ is a subalgebra of $\ascript _n$ and the restriction $\phi\mid\bscript _0$ is a homomorphism, then $\bscript _0$ is a \textit{classical subdomain} for $\phi$. A
\textit{classical domain} $\bscript$ for $\phi$ is a maximal classical subdomain for $\phi$. That is,
$\bscript$ is a classical subdomain for $\phi$ and if $\cscript$ is a classical subdomain for $\phi$ with
$\bscript\subseteq\cscript$, then $\bscript =\cscript$.  Any classical subdomain is contained in a classical domain $\bscript$ but $\bscript$ need not be unique. The rest of this section is devoted to the study of classical domains. For $\phi\in\ascript _n^*$, the $\phi$-\textit{center} $Z_\phi$ is the set of elements $A\in\ascript _n$ such that
\begin{equation}         % equation (3.1)
\label{eq31}
\phi (B)=\phi (B\cap A)\oplus\phi (B\cap A')
\end{equation}
for all $B\in\ascript _n$.

\begin{thm}       % Theorem 3.1
\label{thm31}
$Z_\phi$ is a subalgebra of $\ascript _n$ and $\phi\mid Z_\phi$ is additive.
\end{thm}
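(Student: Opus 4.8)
The plan is to verify the subalgebra axioms directly against the defining condition \eqref{eq31}, disposing of the easy closures first and reducing the real content to a single refinement computation. Note first that $\emptyset\in Z_\phi$: for every $B$ we have $\phi(B\cap\emptyset)\oplus\phi(B\cap\Omega_n)=\phi(\emptyset)\oplus\phi(B)=\phi(B)$, since $\phi(\emptyset)=0$. Next, $Z_\phi$ is closed under complementation, because \eqref{eq31} is unchanged under the substitution $A\mapsto A'$ (using $(A')'=A$ and commutativity of $\oplus$). Consequently, once closure under intersection is established, closure under union and under set difference follow via $A\cup C=(A'\cap C')'$ and $A\smallsetminus C=A\cap C'$, so $Z_\phi$ will be a subalgebra.

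The crux is closure under intersection. Fix $A,C\in Z_\phi$ and an arbitrary $B\in\ascript_n$, and split $B$ into the four mutually disjoint pieces $E_0=B\cap A\cap C$, $E_1=B\cap A\cap C'$, $E_2=B\cap A'\cap C$, $E_3=B\cap A'\cap C'$. Applying \eqref{eq31} for $A$ gives $\phi(B)=\phi(E_0\cupdot E_1)\oplus\phi(E_2\cupdot E_3)$, and then applying \eqref{eq31} for $C$ to each half (noting $(E_0\cupdot E_1)\cap C=E_0$, $(E_0\cupdot E_1)\cap C'=E_1$, and similarly for the other half) yields $\phi(B)=\phi(E_0)\oplus\phi(E_1)\oplus\phi(E_2)\oplus\phi(E_3)$. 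On the other hand $B\cap(A\cap C)=E_0$ and $B\cap(A\cap C)'=B\cap(A'\cup C')=E_1\cupdot E_2\cupdot E_3$, so $A\cap C\in Z_\phi$ will follow once we show $\phi(E_1\cupdot E_2\cupdot E_3)=\phi(E_1)\oplus\phi(E_2)\oplus\phi(E_3)$. For this, observe that $E_1\subseteq A$ while $E_2,E_3\subseteq A'$, so \eqref{eq31} for $A$ applied to the set $E_1\cupdot E_2\cupdot E_3$ gives $\phi(E_1)\oplus\phi(E_2\cupdot E_3)$, and \eqref{eq31} for $C$ applied to $E_2\cupdot E_3$ (where $E_2\subseteq C$, $E_3\subseteq C'$) gives $\phi(E_2)\oplus\phi(E_3)$. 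I expect this to be the only step requiring thought: the guiding idea is that membership of $A$ and $C$ in $Z_\phi$ lets one invoke \eqref{eq31} not just for $B$ but for any subset of $B$ along which $\phi$ turns out to act additively, and the main (mild) obstacle is keeping straight which of the $E_k$ lies in each of $A,A',C,C'$.

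Finally, additivity of $\phi\mid Z_\phi$ is immediate: if $A,C\in Z_\phi$ are disjoint, apply \eqref{eq31} for $A$ with $B=A\cupdot C$; since $(A\cupdot C)\cap A=A$ and $(A\cupdot C)\cap A'=C$, this reads $\phi(A\cupdot C)=\phi(A)\oplus\phi(C)$ (in fact only $A\in Z_\phi$ is used). Combining this with the subalgebra property completes the proof.
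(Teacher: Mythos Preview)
Your proof is correct and follows essentially the same approach as the paper: both verify the subalgebra axioms directly from \eqref{eq31}, reduce everything to closure under intersection, and obtain that closure by repeatedly applying \eqref{eq31} for the two given center elements to split $\phi$ along the induced partition of $B$; the additivity argument is identical. Your organization via the four atoms $E_0,E_1,E_2,E_3$ is a bit more transparent than the paper's chain of substitutions, but the mathematical content is the same.
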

\begin{proof}
It is clear that $\emptyset ,\Omega _n\in Z_\phi$ and that $A'\in Z_\phi$ whenever $A\in Z_\phi$. Suppose that $A,B\in Z_\phi$. We shall show that $A\cap B\in Z_\phi$. Since $B\in Z_\phi$, we have that
\begin{equation*}
\phi (C\cap A)=\phi (C\cap A\cap B)\oplus\phi (C\cap A\cap B')
\end{equation*}
Hence,
\begin{equation*}
\phi (C\cap A\cap B)=\phi (C\cap A)\oplus\phi (C\cap A\cap B')
\end{equation*}
for all $C\in\ascript _n$. It follows that
\begin{align*}
\phi (C\cap A\cap&B)\oplus\phi \paren{C\cap (A\cap B)'}
  =\phi (C\cap A\cap B)\oplus\phi\paren{C\cap (A'\cup B')}\\
  &=\phi (C\cap A)\oplus\phi (C\cap A\cap B')\oplus\phi\sqbrac{(C\cap A')\cup (C\cap B')}\\
  &=\phi (C)\oplus\phi (C\cap A')\oplus\phi (C\cap A\cap B')\oplus\phi
  \sqbrac{(C\cap A')\cup (C\cap B')}\\
  &=\phi (C)\oplus\phi (C\cap A')\oplus\phi (C\cap A\cap B')\oplus\phi (C\cap A\cap B')\\
  &\quad\oplus\phi\sqbrac{(C\cap A')\cup (C\cap A'\cap B')}\\
  &=\phi (C)\oplus\phi (C\cap A')\oplus\phi (C\cap A')=\phi (C)
\end{align*}
Hence, $A\cap B\in Z_\phi$. Moreover, if $A,B\in Z_\phi$, then $A',B'\in Z_\phi$ so
$A'\cap B'\in Z_\phi$. Hence, $A\cup B=(A'\cap B')'\in Z_\phi$. It follows that $Z_\phi$ is a subalgebra of $\ascript _n$. To show that $\phi\mid Z_\phi$ is additive, suppose that
$A,B\in Z_\phi$ with $A\cap B=\emptyset$. We conclude that
\begin{equation*}
\phi (A\cupdot B)=\phi\sqbrac{(A\cupdot B)\cap A}\oplus\phi\sqbrac{(A\cupdot B)\cap A'}
=\phi (A)\oplus\phi (B)\qedhere
\end{equation*}
\end{proof}

A general $\phi\in\ascript _n^*$ may have many classical domains and these seem to be tedious to find. However, $Z_\phi$ is relatively easy to find and we now give a method for constructing classical subdomains within $Z_\phi$. Since $\phi\mid Z_\phi$ is additive, we are partly there and only need to find a subalgebra $Z_\phi ^1$ of $Z_\phi$ on which $\phi$ is multiplicative and not $0$. Indeed, the unital condition (H1) then holds because there exists an $A$ such that $\phi (A)\ne 0$ and hence,
\begin{equation*}
\phi (A)=\phi (A\cap C)=\phi (A)\phi (C)
\end{equation*}
which implies that $\phi (C)=1$ where $C$ is the largest element of $Z_\phi ^1$.

An \textit{atom} in a Boolean algebra is a minimal nonzero element. Let $A_1,\ldots , A_m$ be the atoms of $Z_\phi$. Then $A_1,\ldots ,A_m$ are mutually disjoint, nonempty and
$\cupdot A_i=\Omega _n$. Moreover, every nonempty set in $Z_\phi$ has the form
$B=\cupdot _{j=1}^rA_{i_j}$. Define $A_i^*\in\ascript _n^*$ by $A_i^*(A)=1$ if and only if
$A_i\subseteq A$, $i=1,\ldots ,m$. Since $\phi$ is additive on $Z_\phi$, it follows that $\phi$ has the form
\begin{equation*}
\phi =A_{i_1}^*\oplus\cdots\oplus A_{i_r}^*
\end{equation*}
on $Z_\phi$. We can assume without loss of generality that
\begin{equation*}
\phi =A_1^*\oplus\cdots\oplus A_r^*
\end{equation*}
Let $Z_\phi ^i$ be the subalgebra of $Z_\phi$ generated by $A_i, A_{r+1},\ldots ,A_m$,
$i=1,\ldots ,r$. Then $\phi\mid Z_\phi ^i=A_i^*$, $i=1,\ldots ,r$.

\begin{cor}       % Corollary 3.2
\label{cor32}
If $\phi\ne 0$, then $Z_\phi ^i$ is a classical subdomain for $\phi$, $i=1,\ldots ,r$.
\end{cor}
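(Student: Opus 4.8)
The plan is to make the abstract set $Z_\phi^i$ completely explicit and then recognize $\phi$ restricted to it as (the restriction of) an evaluation map, which the text has already identified as a homomorphism.

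First I would unwind the definition of $Z_\phi^i$. Since the atoms $A_i,A_{r+1},\dots,A_m$ of $Z_\phi$ are pairwise disjoint, the subalgebra of $Z_\phi$ they generate is simply the family of disjoint unions $\cupdot_{j\in T}A_j$ with $T\subseteq\{i,r+1,\dots,m\}$ (and $T=\emptyset$ giving $\emptyset$); thus $Z_\phi^i$ is a finite subalgebra of $\ascript_n$ --- it lies inside $Z_\phi$, a subalgebra of $\ascript_n$ by Theorem~\ref{thm31} --- with largest element $C_i:=A_i\cupdot A_{r+1}\cupdot\cdots\cupdot A_m$. Next I would check the identity $\phi\mid Z_\phi^i=A_i^*$ asserted just above the corollary: every $B\in Z_\phi^i$ is a union of atoms of $Z_\phi$ drawn only from $\{A_i,A_{r+1},\dots,A_m\}$, and distinct atoms are disjoint, so $A_k\subseteq B$ fails for every $k\le r$ with $k\ne i$; combined with $\phi\mid Z_\phi=A_1^*\oplus\cdots\oplus A_r^*$ this gives $\phi(B)=\bigoplus_{k=1}^r A_k^*(B)=A_i^*(B)$.

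The key step is then the observation that on $Z_\phi^i$ this truth function coincides with an evaluation map. Picking any $\omega\in A_i$ (an atom is nonempty), for $B=\cupdot_{j\in T}A_j\in Z_\phi^i$ one has $\omega\in B\iff i\in T\iff A_i\subseteq B$, so $\phi\mid Z_\phi^i=\omega^*\mid Z_\phi^i$. Since $\omega^*$ is a homomorphism on all of $\ascript_n$ (by the characterization recalled above), its restriction automatically satisfies (H2) and (H3) on $Z_\phi^i$, and the unital condition for the Boolean algebra $Z_\phi^i$ --- namely $\phi(C_i)=1$ --- holds because $\omega\in A_i\subseteq C_i$; complements then behave correctly since $C_i=A\cupdot(C_i\smallsetminus A)$ forces $\phi(C_i\smallsetminus A)=1\oplus\phi(A)$. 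Hence $\phi\mid Z_\phi^i$ is a homomorphism and $Z_\phi^i$ is a classical subdomain for $\phi$, for each $i=1,\dots,r$. The hypothesis $\phi\ne 0$ enters only to keep the decomposition on $Z_\phi$ nontrivial so that the $Z_\phi^i$ are actually defined; once $i\le r$ it is automatic that $\phi(A_i)=A_i^*(A_i)=1$, so the restricted map is not identically zero.

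I expect the only delicate point to be the reading of ``homomorphism'' on the subalgebra $Z_\phi^i$: its unit $C_i$ is in general a proper subset of $\Omega_n$, so (H1) has to be interpreted as $\phi(C_i)=1$ rather than $\phi(\Omega_n)=1$, and this is exactly why the base point $\omega$ must be chosen inside $A_i$ rather than as an arbitrary outcome of $\Omega_n$. Everything else is routine bookkeeping with disjoint unions of atoms; a self-contained alternative would be to verify (H1)--(H3) for $A_i^*$ on $Z_\phi^i$ directly, which is a little longer but equally mechanical.
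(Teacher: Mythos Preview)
Your proof is correct and reaches the same conclusion as the paper, but via a slightly different key step. The paper does exactly what you describe at the end as the ``self-contained alternative'': it fixes $i=1$, notes $\phi\mid Z_\phi^1=A_1^*$ is additive with $A_1^*(B_1)=1$, and then verifies multiplicativity of $A_1^*$ directly by observing that $A_1^*(A\cap B)=1\iff A_1\subseteq A\cap B\iff A_1\subseteq A\text{ and }A_1\subseteq B\iff A_1^*(A)A_1^*(B)=1$. Your route instead picks $\omega\in A_i$, shows $A_i^*\mid Z_\phi^i=\omega^*\mid Z_\phi^i$, and then inherits (H1)--(H3) from the fact that evaluation maps are homomorphisms on $\ascript_n$. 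Your argument is marginally slicker because it trades a small computation for a citation; the paper's is marginally more self-contained because it never leaves $Z_\phi^i$. Both are short and essentially equivalent in difficulty.
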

\begin{proof}
For simplicity, we work with $Z_\phi ^1$ and the other $Z_\phi ^i$, $i=2,\ldots ,r$ are similar. Now
$Z_\phi ^1$ is a subalgebra of $\ascript _n$ with largest element
\begin{equation*}
B_1=A_1\cupdot A_{r+1}\cupdot A_{r+2}\cupdot\cdots\cupdot A_m
\end{equation*}
Now $\phi\mid Z_\phi ^1=A_1^*$, $A_1^*$ is additive and $A_1^*(B_1)=1$. To show that $A_1^*$ is multiplicative, we have for $A,B\in Z_\phi ^1$ that $A_1^*(A\cap B)=1$ if and only if
$A_1\subseteq A\cap B$. Since $A_1^*(A)A_1^*(B)=1$ if and only if $A_1\subseteq A$ and $A_1\subseteq B$, we have that $A_1^*(A)A_1^*(B)=1$ if and only if $A_1\subseteq A\cap B$. Hence, $A_1^*(A\cap B)=A_1^*(A)A_1^*(B)$. We conclude that $\phi\mid Z_\phi ^1=A_1^*$ is a homomorphism on $Z_\phi ^1$.
\end{proof}

\begin{exam}{6}    % Example 6
We consider some coevents in $\ascript _3^*$. For $\phi =\omega _1^*\omega _2^*$,
\begin{equation*}
Z_\phi =\brac{\phi,\brac{\omega _3},\brac{\omega _1,\omega _2},\Omega _3}
\end{equation*}
and $Z_\phi$ is the unique classical domain for $\phi$. For
$\psi =\omega _1^*\oplus\omega _2^*$, $Z_\psi =\ascript _3$ and the two classical domains for
$\psi$ are
\begin{equation*}
\brac{\phi,\brac{\omega _1},\brac{\omega _3},\brac{\omega _1,\omega _3}},\quad
\brac{\phi ,\brac{\omega _2},\brac{\omega _3},\brac{\omega _2,\omega _3}}
\end{equation*}
For $\gamma =\omega _1^*\oplus\omega _2^*\oplus\omega _3^*$, $Z_\gamma =\ascript _3$,
$Z_\gamma ^1=\brac{\phi ,\brac{\omega _1}}$, $Z_\gamma ^2=\brac{\phi ,\brac{\omega _2}}$,
$Z_\gamma ^3=\brac{\phi ,\brac{\omega _3}}$. The classical domains for $\gamma$ are
\begin{equation*}
\brac{\phi,\brac{\omega _1},\brac{\omega _2,\omega _3},\Omega _3},\quad
\brac{\phi ,\brac{\omega _2},\brac{\omega _1,\omega _3},\Omega _3},\quad
\brac{\phi ,\brac{\omega _3},\brac{\omega _1,\omega _2},\Omega _3}
\end{equation*}
For $\delta =\omega _1^*\omega _2^*\omega _3^*$, the unique classical domain is
$Z_\delta =\brac{\phi ,\Omega _3}$.
\end{exam}

\section{Transferring Quantum Measures} % Section 4
This section provides a connection between the previous two sections. We study a method for transferring a $q$-measure on $\ascript _n$ to a measure on $\ascript _n^*$. A $q$-measure
$\mu$ on $\ascript _n$ is \textit{pure} if the values of $\mu$ are contained in $\brac{0,1}$. Let
$\mscript (\ascript _n)$ be the set of $q$-measures $\mu$ on $\ascript _n$ such that
\begin{equation*}
\max\brac{\mu (A)\colon A\in\ascript _n}\le 1
\end{equation*}
We have seen in Section~2 that the set of signed $q$-measures on $\ascript _n$ forms a finite dimensional real linear space $\sscript (\ascript _n)$ and it is clear that $\mscript (\ascript _n)$ is a convex subset of $\sscript (\ascript _n)$. Letting $\pscript (\ascript _n)$ be the set of pure
$q$-measures, we now show that the elements of $\pscript (\ascript _n)$ are extremal in
$\mscript (\ascript _n)$. Let $\mu\in\pscript (\ascript _n)$ and suppose that
$\mu =\lambda\mu _1+(1-\lambda )\mu _2$ for $0<\lambda <1$ and
$\mu _1,\mu _2\in\mscript (\ascript _n)$. If $\mu (A)=0$, then
\begin{equation*}
\lambda\mu _1(A)+(1-\lambda )\mu _2(A)=0
\end{equation*}
so that $\mu _1(A)=\mu _2(A)=0$. If $\mu (A)=1$, then
\begin{equation*}
\lambda\mu _1(A)+(1-\lambda )\mu (A)=1
\end{equation*}
so that $\mu _1(A)=\mu _2(A)=1$. Hence, $\mu _1=\mu _2=\mu$ which shows that $\mu$ is extremal.

We next show that the extremal elements of $\mscript (\ascript _n)$ are in $\pscript (\ascript _n)$. To illustrate this, let $\mu\in\mscript (\ascript _2)$ be extremal. If $\mu\notin\pscript (\ascript _2)$, then at least one of the numbers $\mu (\omega _1)$, $\mu (\omega _2)$, $\mu (\Omega _2)$ is not $0$ or $1$. Suppose, for example that $0<\mu (\omega _1)<1$. Then there exists an
$\varepsilon >0$ such that $\varepsilon <\mu (\omega _1)<1-\varepsilon$. Define
$\mu _1\colon\ascript _2\to\real ^+$ by $\mu _1(A)=\mu (A)$ if $A\ne\brac{\omega _1}$ and
$\mu _1(\omega _1)=\mu (\omega _1)+\varepsilon$. Also, define
$\mu _2\colon\ascript _2\to\real ^+$ by $\mu _2(A)=\mu (A)$ if $A\ne\brac{\omega _1}$ and
$\mu _2(\omega _1)=\mu (\omega _1)-\varepsilon$. Then $\mu _1,\mu _2\in\mscript (\ascript _2)$,
$\mu _1\ne\mu _2$ and $\mu =\frac{1}{2}\mu _1+\frac{1}{2}\mu _2$. This contradicts the fact that $\mu$ is extremal. Hence, $\mu\in\pscript (\ascript _2)$. 

\begin{thm}       % Theorem 4.1
\label{thm41}
$\pscript (\ascript _n)$ is precisely the set of extremal elements of $\mscript (\ascript _n)$.
\end{thm}
\begin{proof}
Denoting the set of extremal elements of $\mscript (\ascript _n)$ by $\rmext\mscript (\ascript _n)$ we have already shown that $\pscript (\ascript _n)\subseteq\rmext\mscript (\ascript _n)$. The proof that $\rmext\mscript (\ascript _n)\!\subseteq\pscript (\ascript _n)$ in general appears to be quite tedious so we shall indicate the proof for $\ascript _3$ and $\ascript _4$ whereby the reader can see the pattern. To show that $\rmext\mscript (\ascript _3)\subseteq\pscript (\ascript _3)$ suppose that
$\mu\in\rmext\mscript (\ascript _3)$ and $\mu\notin\pscript (\ascript _3)$. Then one of the numbers
$\mu (A)$, $A\in\ascript\ _3\smallsetminus\brac{\emptyset}$ is not $0$ or $1$. Since
\begin{equation}         % equation (4.1)
\label{eq41}
\mu (\Omega _3)=\mu\paren{\brac{\omega _1,\omega _2}}
  +\mu\paren{\brac{\omega _1,\omega _3}}+\mu\paren{\brac{\omega _2,\omega _3}}
  -\mu (\omega _1)-\mu (\omega _2)-\mu (\omega _3)
\end{equation}
a second of these numbers is not $0$ or $1$. There are various possibilities, all of which are similar, and we shall consider two of them. Suppose, for example, that $0<\mu\paren{\brac{\omega _1,\omega _2}}<1$ and $0<\mu (\omega _3)<1$. Then there exists an $\varepsilon >0$ such that
$\varepsilon <\mu\paren{\brac{\omega _2,\omega _2}}<1-\varepsilon$ and
$\varepsilon <\mu (\omega _3)<1-\varepsilon$. Define $\mu _1\colon\ascript _3\to\real ^+$ by
$\mu _1(A)=\mu (A)$ for $A\ne\brac{\omega _1,\omega _2}$ or $\brac{\omega _3}$ and
$\mu _1(\omega _3)=\mu (\omega _3)+\varepsilon$,
$\mu _1\paren{\brac{\omega _1,\omega _2}}
=\mu\paren{\brac{\omega _1,\omega _2}}+\varepsilon$.
Thus, $\mu _1$ satisfies \eqref{eq41} so $\mu _1\in\mscript (\ascript _3)$. Define
$\mu _2\colon\ascript _3\to\real ^+$ by $\mu _2(A)=\mu (A)$ for
$A\ne\brac{\omega _1,\omega _2}$ or $\brac{\omega _3}$ and
$\mu _2(\omega _3)=\mu (\omega _3)-\varepsilon$,
$\mu _2\paren{\brac{\omega _1,\omega _2}}=\mu\paren{\brac{\omega _1,\omega _2}}-\varepsilon$.
Again $\mu _2$ satisfies \eqref{eq41} so $\mu _2\in\mscript (\ascript _3)$. Also, $\mu _1\ne\mu _2$ and $\mu =\frac{1}{2}\mu _1+\frac{1}{2}\mu _2$ which contradicts the fact that
$\mu\in\rmext\mscript (\ascript _3)$. As another case, suppose that $0<\mu (\omega _1)<1$ and $0<\mu (\Omega _3)<1$. Then there exists an $\varepsilon <\mu (\omega )<1-\varepsilon$ and
$\varepsilon <\mu (\Omega _3)<1-\varepsilon$. Define $\mu _1,\mu _2\colon\ascript _3\to\real ^+$ by $\mu _1(A)=\mu _2(A)=\mu (A)$ for all $A\ne\brac{\omega _1}$ or $\Omega _3$ and
$\mu _1(\omega _1)=\mu (\omega _1)+\varepsilon$,
$\mu _1(\Omega _3)=\mu (\Omega _3)-\varepsilon$,
$\mu _2(\omega _1)=\mu (\omega _1)-\varepsilon$, $\mu _2(\Omega _3)+\varepsilon$. Then
$\mu _1,\mu _2$ satisfy \eqref{eq41} so $\mu _1,\mu _2\in\mscript (\ascript _3)$. Moreover,
$\mu _1\ne\mu _2$ and $\mu =\frac{1}{2}\mu _1+\frac{1}{2}\mu _2$ which contradicts the fact that
$\mu\in\rmext\mscript (\ascript _3)$. Since this method applies to all the cases, we conclude that
$\rmext\mscript (\ascript _3)\subseteq\pscript (\ascript _3)$.

To show that $\rmext\mscript (\ascript _4)\subseteq\pscript (\ascript _4)$ suppose that
$\mu\in\rmext\mscript (\ascript _4)$ and $\mu\notin\pscript (\mscript _4)$. By grade-2 additivity we have that
\begin{equation}         % equation (4.2)
\label{eq42}
\mu (\Omega _4)=\sum _{i<j=1}^4\mu\paren{\brac{\omega _i,\omega _j}}
  -2\sum _{i=1}^4\mu (\omega _i)
\end{equation}
\begin{align}         % equation (4.3)
\label{eq43}
\mu&\paren{\brac{\omega _1,\omega _2,\omega _3}}\\
&=\mu\paren{\brac{\omega _1,\omega _2}}
  +\mu\paren{\brac{\omega _1,\omega _3}}+\mu\paren{\brac{\omega _2,\omega _3}}
  -\mu (\omega _1)-\mu (\omega _2)-\mu (\omega _3)\notag\\
\label{eq44}            % equation (4.4)
\mu&\paren{\brac{\omega _1,\omega _2,\omega _4}}\\
&=\mu\paren{\brac{\omega _1,\omega _2}}
  +\mu\paren{\brac{\omega _1,\omega _4}}+\mu\paren{\brac{\omega _2,\omega _4}}
  -\mu (\omega _1)-\mu (\omega _2)-\mu (\omega _4)\notag\\
\label{eq45}            % equation (4.5)
\mu&\paren{\brac{\omega _1,\omega _3,\omega _4}}\\
&=\mu\paren{\brac{\omega _1,\omega _3}}
  +\mu\paren{\brac{\omega _1,\omega _4}}+\mu\paren{\brac{\omega _3,\omega _4}}
  -\mu (\omega _1)-\mu (\omega _3)-\mu (\omega _4)\notag\\
\label{eq46}            % equation (4.6)
\mu&\paren{\brac{\omega _2,\omega _3,\omega _4}}\\
&=\mu\paren{\brac{\omega _2,\omega _3}}
  +\mu\paren{\brac{\omega _2,\omega _4}}+\mu\paren{\brac{\omega _3,\omega _4}}
  -\mu (\omega _2)-\mu (\omega _3)-\mu (\omega _4)\notag
\end{align}
Since $\mu\notin\pscript (\ascript _4)$ we have that $\varepsilon >0$ where 
\begin{equation*}
\varepsilon =\min\brac{\mu (A),1-\mu (A)\colon A\in\ascript _4,\mu (A)\ne 0,1}
\end{equation*}
There are many possibilities and we shall again treat two of them. In \eqref{eq43}--\eqref{eq46} if there is one term that is not $0$ or $1$, then there are at least two such terms. Suppose that
$\mu\paren{\brac{\omega _1,\omega _2}}$, $\mu\paren{\brac{\omega _1,\omega _3}}$,
$\mu\paren{\brac{\omega _2,\omega _4}}$, $\mu\paren{\brac{\omega _3,\omega _4}}$ are not $0$ or $1$. Define $\mu _1\colon\ascript _4\to\real ^+$ by $\mu _1(A)=\mu (A)$ if $A$ is not one of these four given sets and
\begin{align*}
\mu _1\paren{\brac{\omega _1,\omega _2}}&=\mu\paren{\brac{\omega _1,\omega _2}}+\varepsilon\\
\mu _1\paren{\brac{\omega _1,\omega _3}}&=\mu\paren{\brac{\omega _1,\omega _3}}-\varepsilon\\
\mu _1\paren{\brac{\omega _2,\omega _4}}&=\mu\paren{\brac{\omega _2,\omega _4}}-\varepsilon\\
\mu _1\paren{\brac{\omega _3,\omega _4}}&=\mu\paren{\brac{\omega _3,\omega _4}}+\varepsilon
\end{align*}
Then $\mu _1$ satisfies \eqref{eq42}--\eqref{eq46} so $\mu _1\in\mscript (\ascript _4)$. Define
$\mu _2\colon\ascript _4\to\real ^+$ the same as $\mu _1$ except that $\varepsilon$ is replaced by
$-\varepsilon$. Again, $\mu _2$ satisfies \eqref{eq42}--\eqref{eq46} so
$\mu _2\in\mscript (\ascript _4)$. Moreover, $\mu _1\ne\mu _2$ and
$\mu =\frac{1}{2}\mu _1+\frac{1}{2}\mu _2$ which contradicts the fact that
$\mu\in\rmext\mscript (\ascript _4)$. For our final case, suppose that $\mu (\omega _3)$,
$\mu\paren{\brac{\omega _1,\omega _2}}$, $\mu\paren{\brac{\omega _3,\omega _4}}$,
$\mu\paren{\brac{\omega _1,\omega _2,\omega _4}}$ are not $0$ or $1$. Define
$\mu _1\colon\ascript _4\to\real ^+$ by $\mu _1(A)=\mu (A)$ if $A$ is not one of these four given sets and $\mu _1(\omega _3)=\mu (\omega _3)+\varepsilon$
\begin{align*}
\mu _1\paren{\brac{\omega _1,\omega _2}}&=\mu\paren{\brac{\omega _1,\omega _2}}+\varepsilon\\
\mu _1\paren{\brac{\omega _3,\omega _4}}&=\mu\paren{\brac{\omega _3,\omega _4}}+\varepsilon\\
\mu _1\paren{\brac{\omega _1,\omega _2,\omega _4}}
  &=\mu\paren{\brac{\omega _1,\omega _2,\omega _4}}+\varepsilon
\end{align*}
As before $\mu _1$ satisfies \eqref{eq42}--\eqref{eq46} so $\mu _1\in\mscript (\ascript _4)$. Define
$\mu _2\colon\ascript _4\to\real ^+$ the same as $\mu _1$ except $\varepsilon$ is replaced by
$-\varepsilon$. Again, $\mu _2$ satisfies \eqref{eq42}--\eqref{eq46} so
$\mu _2\in\mscript (\ascript _4)$. Moreover, $\mu _1\ne\mu _2$ and
$\mu =\frac{1}{2}\mu _1+\frac{1}{2}\mu _2$ which contradicts the fact that
$\mu\in\rmext\mscript (\ascript _4)$. Since this method applies to all the cases, we conclude that
$\mscript (\ascript _4)\subseteq\pscript (\ascript _4)$.
\end{proof}

\begin{cor}       % Corollary 4.2
\label{cor42}
If $\mu$ is a $q$-measure on $\ascript _n$, then $\mu$ has the form
$\mu =\sum _{i=1}^m\lambda _i\mu _i$ for $\lambda _i>0$ and $\mu _i\in\pscript (\ascript _n)$.
\end{cor}
\begin{proof}
Since $\mscript (\ascript _n)$ is a compact convex subset of the finite dimensional real linear space $\sscript (\ascript _n)$ by the Krein-Milman theorem, $\mscript (\ascript _n)$ is the closed convex hull of its extremal elements. Applying Theorem~\ref{thm41}, we have that
$\rmext\mscript (\ascript _n)=\pscript (\ascript _n)$. Since $\pscript (\ascript _n)$ is finite, the convex hull of $\pscript (\ascript _n)$ is already closed. Hence, every $\mu\in\mscript (\ascript _n)$ has the form $\mu =\sum _{i=1}^m\lambda _i\mu _i$, where $\mu _i\in\pscript (\ascript _n)$,
$\lambda _i>0$ and $\sum _{i=1}^m\lambda _i=1$. If $\mu$ is a $q$-measure, letting
\begin{equation*}
M=\max\brac{\mu (A)\colon A\in\ascript _n}
\end{equation*}
we have that $\mu /M\in\mscript (\ascript _n)$. The result now follows.
\end{proof}

A coevent $\phi\in\ascript _n^*$ can be considered as a map $\phi\colon\ascript _n\to\brac{0,1}$ where we view $\brac{0,1}\subseteq\real$ with the usual addition and multiplication. If it happens that $\phi\in\mscript (\ascript _n)$ or equivalently $\phi\in\pscript (\ascript _n)$, then we say that
$\phi$ is a \textit{pure coevent}. Conversely, if $\mu$ is a pure $q$-measure, then we call the map
$\muhat\colon\ascript _n\to\integers _2$ with the same values as $\mu$ the \textit{corresponding} pure coevent. The set of all pure coevents in $\ascript _n^*$ is denoted by $\ascript _{n,p}^*$ and is called the \textit{pure anhomomorphic logic}. It is clear that every coevent in $\ascript _2^*$ is pure so that $\ascript _{2,p}^*=\ascript _2^*$. However, there are only 34 pure coevents in $\ascript _3^*$ out of a total $2^{2^3-1}=128$ coevents \cite{gud092}.\medskip

\begin{exam}{7}    % Example 7
Examples of pure coevents in $\ascript _3^*$ are $\omega _1^*$,
$\omega _1^*\oplus\omega _2^*$, $\omega _1^*\oplus\omega _1^*\omega _2^*$
$\omega _1^*\omega _2^*$, $\omega _1^*\oplus\omega _2^*\oplus\omega _1^*\omega _2^*$,
$\omega _1^*\oplus\omega _1^*\omega _2^*\oplus\omega _2^*\omega _3^*$,
$\omega _1^*\oplus\omega _2^*\oplus\omega _1^*\omega _2^*\oplus\omega _1^*\omega _3^*$,
$\omega _1^*\oplus\omega _1^*\omega _2^*\oplus\omega _1^*\omega _3^*
  \oplus\omega _2^*\omega _3^*$,
$\omega _1^*\oplus\omega _2^*\oplus\omega _3^*\oplus\omega _1^*\omega _2^*
  \oplus\omega _1^*\omega _3^*\oplus\omega _2^*\omega _3^*$
and the rest are obtained by symmetry. An example of a $\phi\in\ascript _3^*$ that is not pure is
$\phi =\omega _1^*\oplus\omega _2^*\oplus\omega _3^*$. Indeed, $\phi (\Omega _3)=1$ and
\begin{equation*}
\phi\paren{\brac{\omega _1,\omega _2}}+\phi\paren{\brac{\omega _1,\omega _3}}
  +\phi\paren{\brac{\omega _2,\omega _3}}-\phi (\omega _1)-\phi (\omega _2)-\phi (\omega _3)=-3
\end{equation*}
Another example of a nonpure element of $\ascript _3^*$ is
$\psi =\omega _1^*\oplus\omega _2^*\omega _3^*$. Indeed $\psi (\Omega _3)=0$ and
\begin{equation*}
\psi\paren{\brac{\omega _1,\omega _2}}+\psi\paren{\brac{\omega _1,\omega _3}}
  +\psi\paren{\brac{\omega _2,\omega _3}}-\psi (\omega _1)-\psi (\omega _2)-\psi (\omega _3)=2
\end{equation*}
\end{exam}

\begin{lem}       % Lemma 4.3
\label{lem43}
If $\phi\in\ascript _{n,p}^*$, then $\phi$ is quadratic.
\end{lem}
\begin{proof}
We must show that if $\phi$ satisfies
\begin{equation}         % equation (4.7)
\label{eq47}
\phi (A\cupdot B\cupdot C)=\phi (A\cupdot B)+\phi (A\cupdot C)+\phi (B\cupdot C)
  -\phi (A)-\phi (B)-\phi (C)
\end{equation}
then $\phi$ satisfies
\begin{equation}         % equation (4.8)
\label{eq48}
\phi (A\cupdot B\cupdot C)=\phi (A\cupdot B)\oplus\phi (A\cupdot C)\oplus\phi (B\cupdot C)
  \oplus\phi (A)\oplus\phi (B)\oplus\phi (C)
\end{equation}
Suppose the left hand side of \eqref{eq48} is $1$. Then there are an odd number of $1$s on the right hand side of \eqref{eq47}. Hence, the right hand side of \eqref{eq48} is $1$. Suppose the left hand side of \eqref{eq48} is $0$. Then there are an even number of $1$s on the right hand side of \eqref{eq47}. Hence, the right hand side of \eqref{eq48} is $0$. We conclude that \eqref{eq48} holds so $\phi$ is quadratic.
\end{proof}

It follows from Lemma~\ref{lem43} that $\ascript _{n,p}^*\subseteq\ascript _{n,q}^*$. Of course, not all quadratic coevents are pure. For instance, in Example~7 we showed that the quadratic coevents $\omega _1^*\oplus\omega _2^*\oplus\omega _3^*$ and
$\omega _1^*\oplus\omega _2^*\omega _3^*$ are not pure. Also, there are 64 quadratic coevents in $\ascript _3^*$ and only 34 pure coevents.

If $\ascript _{n,0}^*\subseteq\ascript _n^*$, we can place a measure $\nu$ on $\ascript _{n,0}^*$ by specifying $\nu\paren{\brac{\phi}}\ge 0$ on $\brac{\phi}$ for every
$\phi\in\ascript _{n,0}^*$ and extending $\mu$ to the power set $P(\ascript _{n,0}^*)$ of
$\ascript _{n,0}^*$ by additivity. A $q$-measure $\mu$ on $\ascript _n$ \textit{transfers} to
$\ascript _{n,0}^*$ if there exists a measure $\nu$ on $P(\ascript _{n,0}^*)$ such that
\begin{equation}         % equation (4.9)
\label{eq49}
\nu\paren{\brac{\phi\in\ascript _{n,0}^*\colon\phi (A)=1}}=\mu (A)
\end{equation}
for all $A\in\ascript _n$ \cite{gtw092}. If $\mu$ transfers to $\nu$, then the quantum dynamics given by $\mu$ can be described by a classical dynamics given by $\nu$. Moreover, if $\mu (A)=0$ then
$\brac{\phi\in\ascript _{n,0}\colon\phi (A)=1}$ has $\nu$ measure zero which is a form of preclusivity \cite{sor941, sor942}.

\begin{thm}       % Theorem 4.4
\label{thm44}
If $\ascript _{n,p}^*\subseteq\ascript _{n,0}^*\subseteq\ascript _n^*$, then any $q$-measure $\mu$ on $\ascript _n$ transfers to $\ascript _{n,0}^*$.
\end{thm}
\begin{proof}
By Corollary ~\ref{cor42}, $\mu$ has the form $\mu =\sum _{i=1}^m\lambda _i\mu _i$ for
$\lambda _i>0$ and $\mu _i\in\pscript (\ascript _n)$. Let $\nu$ be the measure on
$P(\ascript _{n,0}^*)$ defined by
\begin{equation}         % equation (4.10)
\label{eq410}
\nu =\sum _{i=1}^m\lambda _i\delta _{\muhat _i}
\end{equation}
Then for every $A\in\ascript _n$ we have that
\begin{align*}
\nu\paren{\brac{\phi\in\ascript _{n,0}^*\colon\phi (A)=1}} &=\sum _{i=1}^m\lambda _i
  \delta _{\muhat _i}\paren{\brac{\phi\in\ascript _{n,0}^*\colon\phi (A)=1}}\\
  &=\sum _{i=1}^m\lambda _i\mu _i(A)=\mu (A)
\qedhere
\end{align*}
\end{proof}

It follows from Theorem~\ref{thm44} that any $q$-measure on $\ascript _n$ transfers to
$\ascript _{n,p}^*$, $\ascript _{n,q}^*$ and $\ascript _n^*$.

\begin{cor}       % Corollary 4.5
\label{cor45}
If $\nu$  is a measure on $P(\ascript _{n,p}^*)$, the there exists a $q$-measure $\mu$ on
$\ascript _n$ that transfers to $\nu$; that is, $\mu$ satisfies \eqref{eq49}.
\end{cor}
\begin{proof}
We can represent $\nu$ as in \eqref{eq410} where $\lambda _i=\nu (\muhat _i)$,
$i=1,\ldots ,m$. If $\mu$ is the $q$-measure on $\ascript _n$ given by
$\mu =\sum _{i=1}^m\lambda _i\mu _i$, then by the proof of Theorem~\ref{thm44} we have that
$\mu$ satisfies \eqref{eq49}.
\end{proof}

The next Corollary shows what happens if the $q$-measure $\mu$ turns out to be a measure.

\begin{cor}       % Corollary 4.6
\label{cor46 }
Suppose $\ascript _{n,0}^*$ satisfies
$\ascript _{n,a}^*\subseteq\ascript _{n,0}^*\subseteq\ascript _n^*$. If $\mu$ is a measure on $\ascript _n$, then $\mu$ transfers to a measure $\nu$ on $\ascript _{n,0}^*$ satisfying $\nu (\phi )=0$ unless $\phi$ is additive. Conversely, if a measure $\nu$ on $\ascript _{n,0}^*$ satisfies
$\mu (\phi )=0$ unless $\phi$ is additive, then there is a measure on $\ascript _n$ that transfers to
$\nu$.
\end{cor}
\begin{proof}
If $\mu$ is a measure on $\ascript _n$ then $\mu$ has the form
\begin{equation}         % equation (4.11)
\label{eq411}
\mu =\sum _{i=1}^n\lambda _i\delta _{\omega _i}
\end{equation}
where $\lambda _i\ge 0$, $i=1,\ldots ,n$. Hence, $\mu$ transfers to 
\begin{equation}         % equation (4.12)
\label{eq412}
\nu =\sum _{i=1}^n\lambda _i\delta _{\omega _i^*}
\end{equation}
on $\ascript _{n,0}^*$ where $\nu (\phi )=0$ unless $\phi$ is additive. Conversely, if $\nu$ is a measure on $\ascript _{n,0}^*$ satisfying $\nu (\phi )=0$ unless $\phi$ is additive, then $\nu$ has the form \eqref{eq412}. Hence, the measure $\mu$ given by \eqref{eq411} transfers to $\nu$.
\end{proof}

Motivated by Theorem~\ref{thm44} one might conjecture that if every $q$-measure on $\ascript _n$ transfers to $\ascript _{n,0}^*$, then $\ascript _{n,p}^*\subseteq\ascript _{n,0}^*$; that is,
$\ascript _{n,p}^*$ is the smallest subset of $\ascript _n^*$ to which every $q$-measure on
$\ascript _n$ transfers. The next example shows that this conjecture does not hold.

\begin{exam}{8}    % Example 8
We have that
\begin{equation*}
\ascript _{2,p}^*=\ascript _2^*=\brac{0,\omega _1^*,\omega _2^*,\omega _1^*\oplus\omega _2^*,
  \omega _1^*\omega _2^*,\omega _1^*\oplus\omega _1^*\omega _2^*,
  \omega _2^*\oplus\omega _1^*\omega _2^*,1}
\end{equation*}
Let $\phi _1=\omega _1^*$, $\phi _2=\omega _2^*$, $\phi _3=\omega _1^*\oplus\omega _2^*$,
$\phi _4=\omega _1^*\omega _2^*$, $\phi _5=\omega _1^*\oplus\omega _1\omega _2^*$ and
$\phi _6=\omega _2^*\oplus\omega _1^*\omega _2^*$. If $\mu$ is an arbitrary $q$-measure on
$\ascript _2$, then $\mu$ transfers to a measure $\nu$ on $\ascript _2^*$ and we have that 
\begin{align*}
\mu (\omega _1)&=\nu\paren{\brac{\phi\in\ascript _2^*\colon\phi (\omega _1)=1}}
  =\nu (\phi _1)+\nu (\phi _3)+\nu (\phi _5)+\nu (1)\\
\mu (\omega _2)&=\nu\paren{\brac{\phi\in\ascript _2^*\colon\phi (\omega _2)=1}}
  =\nu (\phi _2)+\nu (\phi _3)+\nu (\phi _6)+\nu (1)\\
\mu (\Omega _2)&=\nu\paren{\brac{\phi\in\ascript _2^*\colon\phi (\Omega _2)=1}}
  =\nu (\phi _1)+\nu (\phi _2)+\nu (\phi _4)+\nu (1)
\end{align*}
Letting $\ascript _{2,0}^*=\brac{\phi _4,\phi _5,\phi _6}$, we have that every $q$-measure on
$\ascript _2$ transfers to $\ascript _{2,0}^*$. In fact, if $\mu$ is a $q$-measure on $\ascript _2$, then $\mu$ transfers to the measure $\nu$ on $\ascript _{2,0}^*$ given by
$\nu (\phi _4)=\mu (\Omega _2)$, $\nu (\phi _5)=\mu (\omega _1)$,
$\nu (\phi _6)=\mu (\omega _2)$. This example also shows that the measure that $\mu$ transfers to need not be unique. For instance, let $\mu$ be the $q$-measure on $\ascript _2$ given by
$\mu (\omega _1)=1$, $\mu (\omega _2)=1$, $\mu (\Omega _2)=0$. Then $\mu$ transfers to
$\nu _1$ on $\ascript _2^*$ given by $\nu _1(\phi _5)=\nu _1(\phi _6)=1$ and $\nu (\phi )=0$,
$\phi\ne\phi _5,\phi _6$. Also, $\mu$ transfers to $\nu _2$ on $\ascript _2^*$ given by
$\nu _2(\phi _3)=1$ and $\nu (\phi )=0$ for $\phi\ne\phi _3$.
\end{exam}

\begin{exam}{9}    % Example 9
We use the same notation as in Example~8. We first show that a $q$-measure $\mu$ on
$\ascript _2$ transfers to
\begin{equation*}
\ascript _{2,m}^*=\brac{0,\phi _1,\phi _2,\phi _4}
\end{equation*}
if and only if $\mu (\Omega _2)\ge\mu (\omega _1)+\mu (\omega _2)$. If $\mu$ transfers to $\nu$, then $\mu (\omega _1)=\nu (\phi _1)$, $\mu (\omega _2)=\nu (\phi _2)$ and
\begin{equation*}
\mu (\Omega _2)=\nu (\phi _1)+\nu (\phi _2)+\nu (\phi _4)
\end{equation*}
Hence, $\mu (\Omega _2)\ge\mu (\omega _1+\mu (\omega _2)$. Conversely, if
$\mu (\Omega _2)\ge\mu (\omega _1)+\mu (\omega _2)$ then letting
$\nu (\phi _1)=\mu (\omega _1)$, $\nu (\phi _2)=\mu (\omega _2)$, $\nu (0)=0$ and 
\begin{equation*}
\nu (\phi _4)=\mu (\Omega _2)-\mu (\omega _1)-\mu (\omega _2)
\end{equation*}
we see that $\mu$ transfers to $\nu$ on $\ascript _{2,m}^*$. We next show that a $q$-measure
$\mu$ on $\ascript _2$ transfers to
\begin{equation*}
\ascript _{2,a}^*=\brac{0,\phi _1,\phi _2,\phi _3}
\end{equation*}
if and only if
\begin{equation*}
\ab{\mu (\omega _1)-\mu (\omega _2)}\le\mu (\Omega _2)\le\mu (\omega _1)+\mu (\omega _2)
\end{equation*}
If $\mu$ transfers to $\nu$, then $\mu (\omega _1)=\nu (\phi _1)+\nu (\phi _3)$,
$\mu (\omega _2)=\nu (\phi _2)+\nu (\phi _3)$ and $\mu (\Omega _2)=\nu (\phi _1)+\nu (\phi _2)$. Hence, $\mu (\Omega _2)\le\mu (\omega _1)+\mu (\omega _2)$ and
\begin{align*}
\mu (\omega _1)-\mu (\omega _2)&=\nu (\phi _1)+\nu (\phi _2)\le\mu (\Omega _2)\\
\mu (\omega _2)-\mu (\omega _1)&=\nu (\phi _2)-\nu (\phi _1)\le\mu (\Omega _2)
\end{align*}
so the given inequalities hold. Conversely, if the inequalities hold, then letting $\nu (0)=0$ and
\begin{align*}
\nu (\phi _1)&=\tfrac{1}{2}\sqbrac{\mu (\Omega _2)+\mu (\omega _1)-\mu (\omega _2)}\\
\nu (\phi _2)&=\tfrac{1}{2}\sqbrac{\mu (\Omega _2)-\mu (\omega _1)+\mu (\omega _2)}\\
\nu (\phi _3)&=\tfrac{1}{2}\sqbrac{\mu (\omega _1)+\mu (\omega _2)-\mu (\Omega _2)}
\end{align*}
we see that $\mu$ transfers to $\nu$ on $\ascript _{2,a}^*$.
\end{exam}

The next theorem further shows that the multiplicative anhomomorphic logic $\ascript _{n,m}^*$ is not adequate for transferring $q$-measures. This result was proved in \cite{gt09, gtw092}. However, our proof is simpler and more direct.

\begin{thm}       % Theorem 4.7
\label{thm47}
If a $q$-measure $\mu$ on $\ascript _n$ transfers to a measure $\nu$ on $\ascript _{n,m}^*$, then
$\nu (\phi )=0$ unless $\phi$ is quadratic.
\end{thm}
\begin{proof}
We have that
\begin{equation*}
\ascript _{n,m}^*=\brac{0,\omega _1^*,\ldots ,\omega _n^*,\omega _1^*\omega _2^*,
  \ldots ,\omega _{n-1}^*\omega _n^*,\omega _1^*\omega _2^*\omega _3^*
  \ldots ,\omega _1^*\omega _2^*\cdots\omega _n^*}
\end{equation*}
Since $\mu (A)=\nu\paren{\brac{\phi\in\ascript _{n,m}^*\colon\phi (A)=1}}$ we have that
$\mu (\omega _i)=\nu (\omega _i^*)$, $i=1,\ldots ,n$ and
\begin{equation*}
\mu\paren{\brac{\omega _i,\omega _j}}
  =\nu (\omega _i^*)+\nu (\omega _j^*)+\nu (\omega _i^*\omega _j^*)
\end{equation*}
$i,j=1,\ldots ,n$. Now
\begin{align}         % equation (4.13)
\label{eq413}
\mu\paren{\brac{\omega _1,\omega _2,\omega _3}}
  &=\nu (\omega _1^*)+\nu (\omega _2^*)+\nu (\omega _3^*)+\nu (\omega _1^*\omega _2^*)
  +\nu (\omega _1^*\omega _3^*)\notag\\
  &\quad +\nu (\omega _2^*\omega _3^*)+\nu (\omega _1^*\omega _2^*\omega _3^*)
\end{align}
Since $\mu$ is grade-2 additive we have that
\begin{align}         % equation (4.14)
\label{eq414}
\mu\paren{\brac{\omega _1,\omega _2,\omega _3}}
  &=\sum _{i<j=1}^3\mu\paren{\brac{\omega _i,\omega _j}}-\sum _{i=1}^3\mu (\omega _i)
  \notag\\
  &=\sum _{i<j=1}^3\nu (\omega _i^*\omega _j^*)+\sum _{i=1}^n\nu (\omega _i^*)
\end{align}
Comparing \eqref{eq413} and \eqref{eq414} shows that
$\nu (\omega _1^*\omega _2^*\omega _3^*)=0$. In a similar way, we conclude that
$\nu (\omega _i^*\omega _j^*\omega _k^*)=0$, $i,j,k=1,\ldots ,n$, $i<j<k$. Next,
\begin{align}         % equation (4.15)
\label{eq415}
\mu\paren{\brac{\omega _1,\omega _2,\omega _3,\omega _4}}
  &=\nu (\omega _1^*)+\cdots +\nu (\omega _4^*)+\nu (\omega _1^*\omega _2^*)\notag\\
  &\quad +\cdots +\nu (\omega _3^*\omega _4^*)
  +\nu (\omega _1^*\omega _2^*\omega _3^*\omega _4^*)
\end{align}
Since $\mu$ is grade-2 additive we have that
\begin{align}         % equation (4.16)
\label{eq416}
\mu\paren{\brac{\omega _1,\omega _2,\omega _3,\omega _4}}
  &=\sum _{i<j=1}^4\mu\paren{\brac{\omega _i,\omega _j}}-2\sum _{i=1}^4\mu (\omega _i)
  \notag\\
  &=\sum _{i<j=1}^4\nu (\omega _i^*\omega _j^*)+\sum _{i=1}^4\nu (\omega _i^*)
\end{align}
Comparing \eqref{eq415} and \eqref{eq416} shows that
$\nu (\omega _1^*\omega  _2^*\omega _3^*\omega _4^*)=0$. In a similar way, we conclude that
$\nu (\omega _i^*\omega _j^*\omega _k^*\omega _l^*)=0$, $i,j,k,l=1,\ldots ,n$, $i<j<k<l$. Continuing by induction, we have that $\nu (\phi )=0$ unless $\phi$ is quadratic.
\end{proof}

The result (and proof) in Theorem~\ref{thm47} holds if $\ascript _{n,m}^*$ is replaced by any subset $\ascript _{n,0}^*\subseteq\ascript _{n,m}^*$. Theorem~\ref{thm47} also shows that for
$\ascript _{n,0}^*\subseteq\ascript _{n,m}^*$, if a $q$-measure $\mu$ transfers from $\ascript _n$ to $\ascript _{n,0}^*$ then 
\begin{equation*}
\mu\paren{\brac{\omega _i,\omega _j}}\ge\mu (\omega _i)+\mu (\omega _j)
\end{equation*}
$i,j=1,\ldots ,n$, $i\ne j$. Hence, not all $q$-measures can be transferred to $\ascript _{n,m}^*$.


\begin{thebibliography}{99}
% ref 1
\bibitem{dgt08}F.~Dowker and Y.~Ghazi-Tabatabai, 
Dynamical wave function collapse models in quantum measure theory, \textit{J. Phys. A} \textbf{41}, 205306  (2008).
% ref 2
\bibitem{gt09}Y.~Ghazi-Tabatabai, 
Quantum measure theory: a new interpretation, arXiv: quant-ph (0906.0294), 2009.
% ref 3
\bibitem{gtw091}Y.~Ghazi-Tabatabai and P.~Wallden, The emergence of probabilities in anhomomorphic logic,
\textit{J. Phys. A} \textbf{42}, 235303 (2009).
% ref 4
\bibitem{gtw092}Y.~Ghazi-Tabatabai and P.~Wallden, Dynamics and predictions in the co-event interpretation, arXiv: quant-ph (0901.3675), 2009.
% ref 5
\bibitem{gudamm}S.~Gudder, Finite quantum measure spaces, \textit{Amer. Math. Monthly} (2010).
% ref 6
\bibitem{gudms}S.~Gudder, Quantum measure theory, \textit{Math. Slovaca} (to appear).
% ref 7
\bibitem{gud091}S.~Gudder, Quantum measure and integration theory, arXiv: quant-ph (0909.2203), 2009 and \textit{J. Math. Phys.} (to appear).
% ref 8
\bibitem{gud092}S.~Gudder, An anhomomorphic logic for quantum mechanics, arXiv: quant-ph (0910.3253), 2009.
% ref 9
\bibitem{gud093}S.~Gudder, Quantum integrals and anhomomorphic logics, arXiv: quant-ph (0911.1572), 2009.
% ref 10
\bibitem{gud10}S.~Gudder, Quantum reality filters, arXiv: quant-ph (1002.4225), 2010.
% ref 11
\bibitem{hal09}J.~J.~Halliwell, Partial decoherence of histories and the Diosi test, arXiv: quant-ph (0904.4388), 2009 and \textit{Quantum Information Processing} (to appear).
% ref 12
\bibitem{mocs05}X.~Martin, D.~O'Connor and R.~Sorkin, The random walk in generalized quantum theory, \textit{Phys. Rev.} \textbf{071}, 024029 (2005).
% ref 13
\bibitem{sal02}R.~Salgado, Some identities for the quantum measure and its generalizations,
\textit{Mod. Phys.  Letts.~A} \textbf{17} (2002), 711--728.
% ref 14
\bibitem{sor941}R.~Sorkin, 
Quantum mechanics as quantum measure theory, \textit{Mod. Phys. Letts.~A} \textbf{9} (1994), 3119--3127.
% ref 15
\bibitem{sor942}R.~Sorkin, 
Quantum measure theory and its interpretation, in \textit{Proceedings of the 4th Drexel Symposium on quantum nonintegrability}, eds. D.~Feng and B.-L.~Hu, 229--251 (1994).
% ref 16
\bibitem{sor07}R.~Sorkin, 
Quantum mechanics without the wave function, \textit{J.~Phys.~A} \textbf{40} (2007), 3207--3231.
% ref 17
\bibitem{sor09}R.~Sorkin,  
An exercise in ``anhomomorphic logic,'' \textit{J. Phys.} (to appear).
% ref 18
\bibitem{sw08}S.~Surya and P.~Wallden, 
Quantum covers in quantum measure theory, arXiv, quant-ph (0809.1951), 2008.

\end{thebibliography}
\end{document}